\colorlet{shadecolor}{lightgray!30} 
\tikzset{node distance=1.5cm, % Minimum distance between two nodes. Change if necessary.
            every state/.style={ % Sets the properties for each state
                  semithick,
                  fill=gray!10,
                  minimum size=7mm},
            initial text={}, % No label on start arrow
            double distance=2pt, % Adjust appearance of accept states
            every edge/.style={ % Sets the properties for each transition
                   draw,
                   ->,>=stealth', % Makes edges directed with bold arrowheads
                   auto,
            semithick}}
\let\epsilon\varepsilon
\newtheorem{theorem}{Theorem}%  meant for continuous numbers
\newtheorem{proposition}[theorem]{Proposition}% 
\newtheorem{lemma}[theorem]{Lemma}% 
\theoremstyle{definition}
\newtheorem{example}{Example}% 
\newcommand{\Suff}{\mbox{Suff\/}}
\newcommand{\DAWG}{\mbox{\it DAWG\/}}
\newcommand{\ttA}{A} % l'alphabet
\newcommand{\tta}{\mathtt{a}} % 
\newcommand{\ttb}{\mathtt{b}}
\newcommand{\DONE}{\mathit{checked}}
\DeclareMathOperator{\EP}{endpos} % l'ensemble des positions de fin
\DeclareMathOperator{\pos}{\mathit{POS}}
\definecolor{lime}{HTML}{A6CE39}
\DeclareRobustCommand{\orcidicon}{%
	\begin{tikzpicture}
	\draw[lime, fill=lime] (0,0)
	circle [radius=0.16]
	node[white] {{\fontfamily{qag}\selectfont \tiny ID}};
	\draw[white, fill=white] (-0.0625,0.095)
	circle [radius=0.007];
	\end{tikzpicture}
	\hspace{-2mm}
}
\xdef\csname orcid\x\endcsname{\noexpand%
 \href{https://orcid.org/\csname orcidauthor\x\endcsname}{\noexpand\orcidicon}}
\begin{document}
\title{Checking and producing word attractors}

\author[1]{Marie-Pierre B\'eal\orcidlink{0000-0002-0089-1486}}
\author[1]{Maxime Crochemore\orcidlink{0000-0003-1087-1419}}

\author[2]{\\Giuseppe Romana\orcidlink{0000-0002-3489-0684}}
\date{}
\affil[1]{Univ. Gustave Eiffel, CNRS, LIGM, Marne-la-Vall\'ee, France

{\tt \footnotesize \{marie-pierre.beal,maxime.crochemore\}@univ-eiffel.fr}\vspace{10pt}}
\affil[2]{University of Palermo, Palermo, Italy

{\tt \footnotesize giuseppe.romana01@unipa.it}}
\maketitle

\abstract{%------------------------------
The article focuses on word (or string) attractors, which are sets of positions related to the text compression efficiency of the underlying word. The article presents two combinatorial algorithms based on Suffix automata or Directed Acyclic Word Graphs. The first algorithm decides in linear time whether a set of positions on the word is an attractor of the word. The second algorithm generates an attractor for a given word in a greedy manner. Although this problem is NP-hard, the algorithm is efficient and produces very small attractors for several well-known families of words. 
}

%\keywords{Word attractor, string attractor, Directed Acyclic Word Graph, Suffix automaton, Suffix links tree}
%---------------------------------------------------------------------------
\section{Introduction}%-----------------------------------------------------
The notion of a \emph{word attractor}, or simply an attractor, provides a unifying framework for known dictionary-based compressors, as introduced by Kempa and Prezza in~\cite{KempaP18}. 
A word attractor for a finite word \( w \) is a subset \( P \) of positions on \( w \) for which every non-empty factor \( u \) of \( w \) has an occurrence that includes at least one position in \( P \).
That is, there are positions $i$,
$j$ and $t$ that satisfy $u=w_{[i,j]}$, $t\in P$ and $i\leq t\leq j$.

The combinatorial aspects of word attractors have been studied by various authors from a general combinatorics on words perspective and to describe their properties with respect to specific words~\cite{MantaciRRRS21,DBLP:conf/latin/RestivoRS22,KutsukakeMNIBT20,
DBLP:conf/cwords/GheeraertRS23,DBLP:journals/jcta/CassaigneGRRSS24}.

Several articles, such as~\cite{KempaP18,KempaPPR18, DBLP:conf/ictcs/Romana23}, consider testing whether a set of positions on a word is a word attractor for it and provide efficient solutions.
However, finding a word attractor of minimum size turns out to be NP-hard, even if the length of factors is bounded. The same holds for the variant of circular word attractors introduced by Mantaci et al.~\cite{MantaciRRRS21,DBLP:conf/ictcs/Romana23}.

Checking whether a set of positions is a word attractor of a word can be performed in linear time in the size of the word~\cite{KempaP18,KempaPPR18, DBLP:conf/ictcs/Romana23}. 
Specifically, for a word of length $n$ on an alphabet $A$, the algorithms in~\cite{KempaP18, KempaPPR18} run in $O(n)$ time with
$O(n (\log |A| +\log n)$ bit space for integer alphabets of polynomial size in $n$, becoming $O(n \log |A|)$ bit space when $n$ is polynomial in the size of the alphabet.
The algorithm in~\cite{DBLP:conf/ictcs/Romana23} runs in $O(n)$ time with $O(n \log n)$ bit space for integer alphabets of polynomial size in $n$.
However, these algorithms use recent advances in compact data structures: compressed suffix arrays supporting constant-time LF (Last-to-First) function computation for the algorithm in~\cite{KempaPPR18}, and suffix arrays with the LCP (Longest Common Prefix) array in~\cite{DBLP:conf/ictcs/Romana23}.

In this paper, Section~\ref{sect-checking} revisits the algorithmic question of whether a set of positions on a word is an attractor for the word, using only the Suffix automaton of the word, also called its Directed Acyclic Word Graph (DAWG) (\cite{BlumerBHECS85,CHL07cup}).
Although suffix arrays can be converted into Suffix automata and vice versa, our algorithm is simpler because it checks the attractor property with a direct and simple computation on the DAWG. The time complexity for computing the DAWG with its suffix links is $O(n)$ for a fixed-size alphabet, $O(n \log |A|)$ for a general ordered alphabet $A$, and $O(n)$ in~\cite{FujishigeTIBT23} for an integer alphabet of polynomial size. 
Assuming that the DAWG and its attributes are computed, our algorithm runs in $O(n)$ time, independently of the alphabet size. Its space complexity is the space complexity of the DAWG, that is, linear with respect to the word length.

Section~\ref{sect-getting} describes a relatively simple algorithm to exhibit an attractor of a word. Since the question is NP-hard (see~\cite{KempaP18}), the algorithm is based on a heuristic that works in a greedy fashion to successively produce the positions of the attractor. Additionally, our algorithm runs in linear time when the attractor is bounded.

A general greedy quadratic algorithm to compute a small attractor is given in~\cite[Definition 19]{SchaefferShallit2024}. 
Note that the hardness result does not necessarily apply to automatic sequences or to substitutive sequences generated by constant-length substitutions.

Experiments on known families of words, carried over sufficiently long words, show that the output of our algorithm is often a small attractor, sometimes the smallest possible. An implementation is available on GitHub~\cite{BealCrochemoreSoftware2025}.

%---------------------------------------------------------------------------
\section{Definitions}%------------------------------------------------------
\label{sect-definitions}
Let $\ttA$ be an \emph{alphabet}, a finite set of letters.
A (finite) word $w$ over $\ttA$ is denoted by $w_1w_1\cdots w_{n}$ where all $w_i$ are in $\ttA$ and $n\ge 0$. Its \emph{length} $n$ is denoted by $|w|$ and the set of \emph{positions} on $w$ is $\llbracket 1, n \rrbracket$.
We denote the word $w_i w_{i+1} \cdots w_j$ by $w_{[i,j]}$, for $i,j\in\llbracket 1,n \rrbracket$. By convention, it is the empty word $\epsilon$ if $j < i$.
The word $w_{[i,j]}$ is called a \emph{factor} (or a \emph{block}) of $w$. It is a \emph{prefix} of $w$ when $i =1$ and a suffix of $w$ when $j = n$.
An \emph{occurrence} of a non-empty factor $u$ of $w$ is a pair $(i,j)$, $i \leq j$, of positions satisfying $u = w_{[i,j]}$. The position $i$ is the starting position of the occurrence, and $j$ is its ending position.
Note that $j=i+|u|-1$. 

Let $\EP(u)$ denote the set of ending positions of all occurrences of $u$ in~$w$,
let $\Suff(w)$ denote the set of suffixes of $w$, and let $\mathcal{S}(w)$ denote its suffix automaton, that is, the minimal automaton accepting $\Suff(w)$. As such, the states of $\mathcal{S}$ are (identified with) the equivalence classes of the set of factors of $w$ under the equivalence 
\begin{equation*}
u \equiv v \text{  if and only if  }  u^{-1} \Suff(w) = v^{-1} \Suff(w).
\end{equation*}
Equivalently, a state $q$ of $\mathcal{S}(w)$ is the set of factors of $w$ having the same set of ending positions. The smallest ending position of any factor $u$ in $q$ is indicated by $\pos[q]$.

\smallskip
In this article, we consider all states of $\mathcal{S}(w)$ to be terminal states, which transforms the structure into a factor automaton (which is not necessarily a minimal automaton~\cite{CHL07cup}). As such, it is also called a Directed Acyclic Word Graph (DAWG) after Blumer et al.
\cite{BlumerBHECS85}.

The automaton $\mathcal{S}(w)$ is a tuple $(Q, I,\delta)$, where $Q$ is the set of states, $I$ is its initial state, and $\delta$ is its total transition function. 

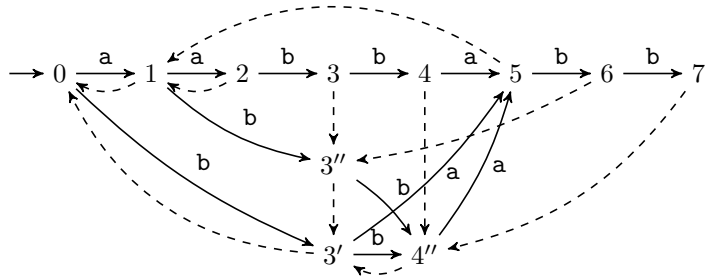
\begin{figure}
\begin{center} % DAWG of aabbabb
    \begin{tikzpicture}[node distance=12mm,>=stealth] % (scale=.2}] %Thue-Morse
      \node[initial] (0) {$0$};
      \node[right of=0] (1) {$1$};
      \node[right of=1] (2) {$2$};
      \node[right of=2] (3) {$3$};
      \node[right of=3] (4) {$4$};
      \node[right of=4] (5) {$5$};
      \node[right of=5] (6) {$6$};
      \node[right of=6] (7) {$7$};
      \node[below of=3] (8) {$3''$};
      \node[below of=8] (9) {$3'$};
      \node[right of=9] (10) {$4''$};
      \draw (0) edge node {$\tta$} (1);
      \draw (1) edge node {$\tta$} (2);
      \draw (2) edge node {$\ttb$} (3);
      \draw (3) edge node {$\ttb$} (4);
      \draw (4) edge node {$\tta$} (5);
      \draw (5) edge node {$\ttb$} (6);
      \draw (6) edge node {$\ttb$} (7);
      \draw (1) edge[bend right=15] node {$\ttb$} (8);
      \draw (8) edge[bend left=10] node {$\ttb$} (10); %%%%%%%%%%%%%%%%%%%
      \draw (0) edge[bend right=10] node {$\ttb$} (9);
      \draw (10) edge[bend right=10] node[right] {$\tta$} (5);
      \draw (9) edge node {$\ttb$} (10);
      \draw (9) edge[bend right=10] node[right] {$\tta$} (5);
      \draw (1) edge[dashed, bend left]  (0);
      \draw (2) edge[dashed, bend left]  (1);
      \draw (3) edge[dashed] (8);
      \draw (4) edge[dashed] (10);
      \draw (5) edge[dashed, bend right=35] (1);
      \draw (6) edge[dashed, bend left=10] (8);
      \draw (7) edge[dashed, bend left=20] (10);
      \draw (8) edge[dashed] (9);
      \draw (9) edge[dashed, bend left] (0);
      \draw (10) edge[dashed, bend left] (9);
    \end{tikzpicture}
\end{center}
\caption{\label{figure.DAWG} The DAWG of $w=\mathtt{aabbabb}$. 
 Its initial state is $0$. As a Suffix automaton, its terminal states are $0$, $3'$, $4''$, and $7$. }
\end{figure}

Figure~\ref{figure.DAWG} shows the DAWG of $w=\mathtt{aabbabb}$ with states (or nodes) and arcs. Plain arcs, labelled by letters occurring in the word, represent the transition function. The paths from the initial state $0$ are labelled by all the factors of $w$ and only them.

The dashed arcs of the DAWG correspond to a function $F \colon Q\setminus\{I\}\rightarrow Q$ called its failure or suffix function. They are called failure links or suffix links.
The function $F$ is defined as follows. Let $q$ be a state in $Q\setminus\{I\}$ 
and let $vu$ be any factor in $q$  ($q=\delta(I,vu)$), where 
$u$ is the longest suffix of $vu$ with $\delta(I,u) = r \neq q$. Then $F[q]=r$, the class of factor $u$. 
The definition is independent of the choice of $vu$ in $q$.
The function $F$ is a crucial component both in the efficient construction of these automata and in the application of automata as pattern matching machines (see~\cite[Chapters 5 and 6]{CHL07cup}).

Note, though useless in this paper, that the terminal states of \sloppy $\Suff(\mathtt{aabbabb})$ are $7$, $4''$, $3'$, and $0$, found on the suffix path from the last state $7$ to the initial state $0$.

We say that a state $q$ of $\mathcal{S}(w)$ is a \emph{leaf} if there is no state $r$ such that $F(r) = q$.
A state $q$ is said to be a \emph{prefix state} if $q = \delta(I, u)$, where $u$ is a prefix of $w$.
Leaves are prefix states, but the converse is not true.
Non-prefix states are states that are obtained as clones of other states during the incremental construction of $\mathcal{S}(w)$. 

In addition to $\pos$ and $F$, the construction of $\mathcal{S}(w)$ also provides, for each state $q$, the maximum length $L[q]$ of factors in $q$. 
The maximum length $L[q]$ is not used in our algorithm, but we need
the minimum length $\ell[q]$ of factors in $q$. Its value is $0$ for $q=I$ and equals $L(F[q])+1$ when $q\neq I$. 

Here are the values of $F$ and $\ell$ for the example automaton $\mathcal{S}(\mathtt{aabbabb})$. The prefix states are the states $0, 1, 2, 3, 4, 5, 6, 7$.

\[
\begin{array}{|r|lllllllllll|}
\hline
j & & 1 & 2& 3 & & & 4 & & 5 & 6 & 7\\
\hline
w_j & & a & a & \ttb & & & \ttb & & a & \ttb & \ttb\\
\hline\hline
q       & 0 & 1 & 2 & 3 & 3'& 3''& 4 & 4''& 5 & 6 & 7 \\
\hline
%\pos[q] & 0 & 1 & 2 & 3 & - & - & 4 & - & 5 & 6 & 7 \\
F[q]    & - & 0 & 1 & 3''& 0 & 3'& 4''& 3'& 1 & 3''& 4''\\
%L[q]    & 0 & 1 & 2 & 3 & 1 & 2 & 4 & 3 & 5 & 6 & 7 \\
\ell[q] & 0 & 1 & 2 & 3 & 1 & 2 & 4 & 2 & 2 & 3 & 4 \\
\pos[q] & 0 & 1 & 2 & 3 & 3 & 3 & 4 & 4 & 5 & 6 & 7 \\
\hline
\end{array}
\]

%---------------------------------------------------------------------------
\section{Checking attractors}
\label{sect-checking}

In this section, we provide a linear-time algorithm to check whether a set of positions on a word is an attractor of the word.

Let $P$ be a set of positions on the word $w$ of length $n$, that is, a subset of $\llbracket 1, n \rrbracket$.
We say that a non-empty factor $u$ of $w$ \emph{covers} $P$ if there is an occurrence
$(i,j)$ of $u$ and an element of $P$ in $\llbracket i, j \rrbracket$.
Recall that a set $P$ of positions on $w$ is a \emph{word attractor} of $w$ if all its factors cover $P$.

\begin{example}
    Let us consider the word $w=aabbabb$ and the set $P = \{2,4\}$.
    Clearly, every non-empty factor $u$ having an occurrence $(i,j)$ in $w$ with $i\leq2\leq j$ or $i\leq4\leq j$ covers $P$. In the following, we list all the remaining occurrences of factors in $w$:
    \begin{itemize}
        \item $w_{[1,1]}=w_{[5,5]}=a$,
        \item $w_{[3,3]}=w_{[6,6]}=w_{[7,7]}=b$, 
        \item $w_{[5,6]}=ab$,
        \item $w_{[6,7]}=bb$, and
        \item $w_{[5,7]}=abb$.
    \end{itemize}
    Since all these factors cover $P$ through the occurrences $(2,2)$, $(4,4)$, $(2,3)$, $(3,4)$, and $(2,4)$ respectively, the set $P$ is a word attractor of $w$.
\end{example}

The definition can be reformulated in terms of the Suffix automaton of the word as follows. 

\begin{lemma} \label{lemma.smallest}
A set $P$ of positions on a word $w$ is a word attractor of $w$ if and only if, for any state $q$, $q\neq I$, of $\mathcal{S}(w)$, the shortest word $u$ for which $\delta(I,u)=q$ covers $P$.
\end{lemma}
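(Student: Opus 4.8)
The plan is to prove both implications by unwinding the definitions, using the fact that for a state $q \neq I$ of $\mathcal{S}(w)$, the set of words $u$ with $\delta(I,u) = q$ is exactly one equivalence class of factors, all sharing the same ending-position set $\EP(q)$, and that this class is closed under taking suffixes down to the shortest element: if $u$ is the shortest word reaching $q$, then the words reaching $q$ are precisely those factors $vu$ (with $v$ possibly empty) whose occurrences all extend some occurrence of $u$ to the left, so every occurrence of $u$ is the suffix of an occurrence of each longer word in the class, and conversely. The key numerical observation is that $u$ has minimum length $\ell[q]$ among words in $q$, and the occurrences of $u$ are the pairs $(e - \ell[q] + 1,\, e)$ for $e \in \EP(q)$.

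For the forward direction, suppose $P$ is an attractor. Fix a state $q \neq I$ and let $u$ be the shortest word with $\delta(I,u) = q$; since $u$ is a nonempty factor of $w$, the attractor property gives an occurrence $(i,j)$ of $u$ with $P \cap \llbracket i,j \rrbracket \neq \emptyset$, i.e. $u$ covers $P$. This direction is essentially immediate, as the conclusion is just a restriction of the hypothesis to a distinguished set of factors.

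The substance is the converse. Assume that for every state $q \neq I$ the shortest word reaching $q$ covers $P$; we must show every nonempty factor $x$ of $w$ covers $P$. Let $q = \delta(I,x)$, which is not $I$ since $x$ is nonempty, and let $u$ be the shortest word with $\delta(I,u) = q$, so $u$ is a suffix of $x$ and by hypothesis $u$ has an occurrence $(i,j)$ with some $t \in P$, $i \le t \le j$. The crucial step is to promote this occurrence of $u$ to an occurrence of $x$ covering the same position $t$: since $u \equiv x$, they have the same ending positions, so $j = i + |u| - 1 \in \EP(q) = \EP(x)$, hence $x$ has an occurrence ending at $j$, namely $(j - |x| + 1,\, j)$. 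Because $|x| \ge |u|$, we have $j - |x| + 1 \le j - |u| + 1 = i \le t \le j$, so $t$ lies in the block $\llbracket j - |x| + 1,\, j \rrbracket$ of this occurrence of $x$. Thus $x$ covers $P$, and since $x$ was an arbitrary nonempty factor, $P$ is an attractor.

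The main obstacle — really the only point requiring care — is justifying that $u$ is a suffix of $x$ and that $\EP(u) = \EP(x)$, i.e. that every word reaching a given state contains the shortest such word as a suffix and inherits its ending positions. This follows from the definition of the equivalence $\equiv$ (words in the same state have the same ending-position set) together with the standard structural fact about suffix automata, implicit in the definition of $F$ and $\ell$ recalled above: the words reaching $q$ form a chain of suffixes of lengths $\ell[q], \ell[q]+1, \dots, L[q]$, the shortest being a suffix of all the others. Once this is in hand, the length inequality $|x| \ge |u|$ does the rest, and no case analysis on prefix versus cloned states is needed.
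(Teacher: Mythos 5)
Your proof is correct and follows essentially the same route as the paper's: the forward direction is the trivial restriction of the attractor property to shortest representatives, and the converse promotes a covering occurrence $(i,j)$ of the shortest word $u$ in the class of $x$ to the occurrence $(j-|x|+1,\,j)$ of $x$ ending at the same position, which still contains the attractor position since $j-|x|+1\le i$. You merely make explicit the structural fact (equal ending-position sets within a state, shortest word a suffix of the others) that the paper uses implicitly.
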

\begin{proof}
The automaton $\mathcal{S}(w)$ accepts all factors occurring in $w$ since all states are terminal: for every non-empty factor $u$ of $w$, there exists a state $q \neq I$ of $\mathcal{S}(w)$ with $\delta(I,u)=q$. Thus, if $P$ is a word attractor, $u$ covers $P$. 

Conversely, let $v$ be a factor of $w$ and $\delta(I,v)=q$. Let $u$ be the shortest factor of $w$ such that $\delta(I,u)=q$. If $u$ covers $P$, then is an occurrence $(i, j)$ of $u$ in $w$ containing a position $k$ in $P$. Thus, there is an occurrence $(i', j)$ of $v$ for some $i' \leq i$ of $v$ in $w$.
This occurrence contains $k$.
\end{proof}

Let $P$ be a set of positions on $w$. 
For a position $j$ on $w$, we define the \emph{distance from $j$ to $P$ on the left} by 
\begin{equation*}
d_P(j) = \min\{\{j-i \mid i \in P, i \leq j\}\cup\{+\infty\}\}.
\end{equation*}
For a factor $u$ of $w$, we define
\begin{equation*}
d_P(u) = \min\{ d_P(j) \mid j \in \EP(u)\}.
\end{equation*}
and, for a state $q$ of $\mathcal{S}(w)$,
\begin{equation*}
d_P(q) = d_P(u), 
\end{equation*}
where $u$ is the shortest word in $q$. % or any word

The lemma~\ref{lemma.smallest} leads to the main property on which the algorithm \Algo{IsAttractor} below is built.

\begin{lemma}  \label{lemma.distance}
A set $P$ of positions on a word $w$ is a word attractor of $w$ if and only if, for each state $q$, $q\neq I$, of the Suffix automaton of $w$, $d_P(q)<\ell[q]$.
\end{lemma}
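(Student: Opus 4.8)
The plan is to reduce Lemma~\ref{lemma.distance} to Lemma~\ref{lemma.smallest}, so the real work is to translate the combinatorial statement ``the shortest word $u$ in $q$ covers $P$'' into the arithmetic inequality $d_P(q) < \ell[q]$. Fix a state $q \neq I$ and let $u$ be the shortest word with $\delta(I,u) = q$; note $|u| = \ell[q]$ by the definition of $\ell$. The key observation is that an occurrence $(i,j)$ of $u$ (so $j \in \EP(u)$ and $i = j - |u| + 1$) covers $P$ precisely when there is some $k \in P$ with $i \le k \le j$, i.e.\ $j - |u| + 1 \le k \le j$, i.e.\ $0 \le j - k \le |u| - 1 = \ell[q] - 1$. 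Writing this in terms of the left-distance, this says $d_P(j) \le \ell[q] - 1$, equivalently $d_P(j) < \ell[q]$.

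First I would make the above equivalence precise for a single occurrence, then quantify over all occurrences: $u$ covers $P$ iff \emph{some} occurrence $(i,j)$ of $u$ covers $P$ iff \emph{some} $j \in \EP(u)$ satisfies $d_P(j) < \ell[q]$ iff $\min\{d_P(j) \mid j \in \EP(u)\} < \ell[q]$, which is exactly $d_P(u) < \ell[q]$, i.e.\ $d_P(q) < \ell[q]$ by definition. One small point to spell out: the value $d_P(j)$ could be $+\infty$ (when no position of $P$ lies at or to the left of $j$); in that case $d_P(j) < \ell[q]$ simply fails, which is consistent since such an occurrence cannot contain a position of $P$ to the left of $j$, and it also cannot contain one to the right since $j$ is the rightmost position of the occurrence. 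So restricting attention to positions of $P$ on the left of $j$ loses nothing.

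Then I would assemble the two directions. If $P$ is an attractor, Lemma~\ref{lemma.smallest} says the shortest word $u$ in every state $q \neq I$ covers $P$, and by the equivalence just established this gives $d_P(q) < \ell[q]$ for every such $q$. Conversely, if $d_P(q) < \ell[q]$ holds for every state $q \neq I$, then the same equivalence shows the shortest word in each such state covers $P$, so by Lemma~\ref{lemma.smallest} again $P$ is an attractor.

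I do not expect a genuine obstacle here; the lemma is essentially a bookkeeping reformulation. The one place to be careful is the index arithmetic linking an occurrence $(i,j)$, its length $|u| = \ell[q]$, and the definition of $d_P$ --- in particular checking that the strict inequality ``$<\ell[q]$'' (rather than ``$\le \ell[q]$'') is the correct boundary, which comes from $i = j - |u| + 1$ so that a position $k = j$ gives $d_P(j)$ as large as $|u|-1 = \ell[q]-1$. Handling the $+\infty$ case cleanly, as noted above, is the only other subtlety, and it is immediate.
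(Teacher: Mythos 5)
Your proposal is correct and follows exactly the route the paper intends: the paper's own proof is the one-line remark that the lemma is ``a direct consequence of Lemma~\ref{lemma.smallest} and of the definition of $d_P$,'' and your write-up simply makes that reduction explicit, with the index arithmetic ($i = j - |u| + 1$, hence cover iff $d_P(j) \le \ell[q]-1$) and the $+\infty$ case handled correctly. No discrepancy to report.
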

\begin{proof}
This is a direct consequence of Lemma~\ref{lemma.smallest} and of the definition of $d_P$. 
\end{proof}

The algorithm \Algo{IsAttractor} checks whether a set $P$ of positions on a word $w$ is a word attractor. In the first step, it computes the values $d(j)$ representing $d_P(j)$, for all positions $j$ on $w$. 
We assume that we have the table $T$ of prefix states, defined by $T[j]=q$ for the prefix state $q = \delta(I, w_{[1, j]})$ with $j\in\llbracket 1,|w|\rrbracket$.
Then, $d[q]$ is initialised to $d[T[j]]$. 
If \( q \) is a non-prefix state, \( d[q] \) is initialised to \( +\infty \).

The second step computes the values $d[q]$ according to the definition of \( d_P(q) \) during a bottom-up traversal of the tree of suffix links. Specifically, the function \( F \) infers a parent link between states, which defines the tree.

At the end of both steps of the computation, the values $d[q]$ represent $d_P(q)$
for all states $q$ of the automaton. 

Figure~\ref{figure.links} displays the tree of suffix links associated with the previous example. Its root is state 0 and its leaves are states 2, 3, 4, 5, 6, and 7.

\begin{figure}
\begin{center} % Suffix links of DAWG of aabbabb
    \begin{tikzpicture}[node distance=12mm,>=stealth] % (scale=.2}] %Thue-Morse
      \node[initial] (0) {$0$};
      \node[right of=0] (1) {$1$};
      \node[right of=1] (2) {$2$};
      \node[right of=2] (3) {$3$};
      \node[right of=3] (4) {$4$};
      \node[right of=4] (5) {$5$};
      \node[right of=5] (6) {$6$};
      \node[right of=6] (7) {$7$};
      \node[below of=3] (8) {$3''$};
      \node[below of=8] (9) {$3'$};
      \node[right of=9] (10) {$4''$};
      \path (.63,.2) node (51) {$\tta$};
      \node[right of=51] (52) {$\tta$};
      \node[right of=52] (53) {$\ttb$};
      \node[right of=53] (54) {$\ttb$};
      \node[right of=54] (55) {$\tta$};
      \node[right of=55] (56) {$\ttb$};
      \node[right of=56] (57) {$\ttb$};
      \draw (1) edge[dashed, bend left]  (0);
      \draw (2) edge[dashed, bend left]  (1);
      \draw (3) edge[dashed] (8);
      \draw (4) edge[dashed] (10);
      \draw (5) edge[dashed, bend right=30] (1);
      \draw (6) edge[dashed, bend left=10] (8);
      \draw (7) edge[dashed, bend left=20] (10);
      \draw (8) edge[dashed] (9);
      \draw (9) edge[dashed, bend left] (0);
      \draw (10) edge[dashed, bend left] (9);
    \end{tikzpicture}
\end{center}
\caption{\label{figure.links} The tree of suffix links of the DAWG $\mathcal{S(\mathtt{aabbabb})}$ in Figure~\ref{figure.DAWG}. 
 Its root is the initial state $0$; and its leaves are states $2, 3, 4, 5, 6$, and $7$.}
\end{figure}
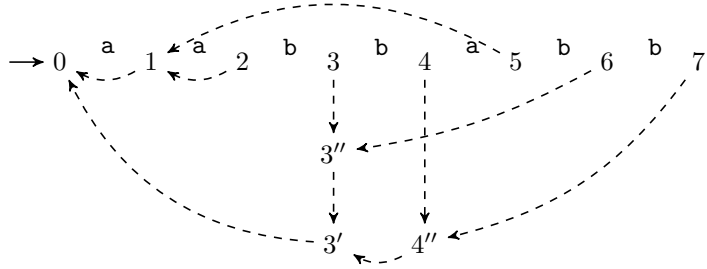

Eventually, the algorithm stops during the execution of the for loop in lines~\ref{algo1-line-10}-\ref{algo1-line-13} if ever the condition in Lemma~\ref{lemma.distance} is not satisfied, or successfully after the full execution of the loop.

\begin{algo}{IsAttractor}{\mbox{word }w, P \mbox{ set of positions on }w}
\COM{$\mathcal{T}$ is the tree of suffix links of the DAWG $\mathcal{S}(w)$}
\COM{$F$ is the suffix links table}
\COM{$T$ is the table of prefix states}  \RCOM{7}{$T[i] = \delta(I,w_{[1,i]})$}
  \DOFOR"{each state $q$ of $\mathcal{S}(w)$}
    \SET{d[q]}{\infty}
  \OD
  \SET{k}{-\infty}
  \DOFORI{j}{1}{|w|}
    \IF{j\in P}
      \SET{k}{j}
    \FI
    \SET{d[T[j]]}{j-k}
  \OD 
  \DOFOR"{each node $q\neq I$ in a bottom-up traversal of $\mathcal{T}$} \label{algo1-line-10}
      \SET{d[q]}{\min\{\{d[q]\}\cup\{d[r] \mid F(r)=q\}\}} \label{algo1-line-11}
    \IF{d[q]\geq\ell[q]}  \label{algo1-line-12}
      \RETURN \False \label{algo1-line-13}
    \FI
  \OD 
  \RETURN \True \label{algo1-line-14}
\end{algo}

\begin{proposition} 
The algorithm \Algo{IsAttractor} checks whether a set $P$ of positions on a word $w$ is a word attractor of $w$. That is, $P$ is a word attractor of $w$ if and only if $\Call{IsAttractor}{w,P}=\True$. 
\end{proposition}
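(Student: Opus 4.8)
The argument is driven by Lemma~\ref{lemma.distance}, which reduces the attractor property to the numerical condition $d_P(q)<\ell[q]$ for every state $q\neq I$. The plan is to establish the invariant that, at the moment the test on line~\ref{algo1-line-12} is evaluated for a node $q$, the cell $d[q]$ contains exactly $d_P(q)$; the claimed equivalence then follows by inspecting the control flow. I would first analyse the \texttt{for} loop on $j$ and show that on its completion $d[T[j]]=d_P(j)$ for every position $j$, while $d[q]=+\infty$ for every non-prefix state $q$ and for $I$. The running variable $k$ equals $\max(\{i\in P:i\le j\}\cup\{-\infty\})$ right after iteration $j$, so $j-k=d_P(j)$ by definition of $d_P$. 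Moreover distinct prefixes of $w$ lie in distinct states — if $T[j_1]=T[j_2]$ with $j_1<j_2$ then $j_1\in\EP(w_{[1,j_1]})\setminus\EP(w_{[1,j_2]})$ since a factor of length $j_2>j_1$ cannot end at position $j_1$, a contradiction — hence no assignment $d[T[j]]\leftarrow j-k$ is overwritten by a later one. Non-prefix states are never written by this loop, and $T[j]\neq I$ for $j\ge 1$ because the DAWG is acyclic with $I$ having no incoming arc, so those cells keep their initial value $\infty$.

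The core of the proof is the bottom-up loop, and it relies on the standard decomposition of ending-position sets along the suffix-link tree $\mathcal T$: for every state $q\neq I$,
\begin{equation*}
\EP(q)=\Big(\bigcup_{r\,:\,F(r)=q}\EP(r)\Big)\ \cup\ \begin{cases}\{j\}&\text{if }q=T[j]\text{ is a prefix state},\\[2pt]\emptyset&\text{otherwise,}\end{cases}
\end{equation*}
which follows from the definition of $F$: a position $j$ is owned natively by the prefix state $T[j]$ and is inherited by exactly the ancestors of $T[j]$ in $\mathcal T$, and the prefix's own ending position $j$ lies in no child's set because factors in a child of $T[j]$ have length at least $L[T[j]]+1=j+1$. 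Since every factor in $q$ has ending-position set $\EP(q)$, we get $d_P(q)=\min\{d_P(i):i\in\EP(q)\}$, and taking minima over the identity above yields
\begin{equation*}
d_P(q)=\min\Big(\{\,d_P(j): q=T[j]\,\}\ \cup\ \{\,d_P(r): F(r)=q\,\}\Big).
\end{equation*}

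I would then prove the invariant by induction following the bottom-up order used by the loop. When node $q$ is reached (no earlier iteration having returned \False), every child $r$ of $q$ has already been finalised with $d[r]=d_P(r)$ by the induction hypothesis, whereas $d[q]$ still carries the first-loop value $d_P(j)$ if $q=T[j]$ and $+\infty$ otherwise; hence line~\ref{algo1-line-11} overwrites $d[q]$ with exactly $d_P(q)$ by the displayed formula, and the test on line~\ref{algo1-line-12} tests precisely whether $d_P(q)\ge\ell[q]$. The base case is a leaf of $\mathcal T$, which is a prefix state $T[j]$ with no children and $\EP(q)=\{j\}$, so $d[q]=d_P(j)=d_P(q)$ already holds after the first loop.

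Finally I would assemble the equivalence. If $P$ is a word attractor, Lemma~\ref{lemma.distance} makes every test on line~\ref{algo1-line-12} fail, so the loop runs to completion and line~\ref{algo1-line-14} returns \True. If $P$ is not an attractor, Lemma~\ref{lemma.distance} supplies a state $q$ with $d_P(q)\ge\ell[q]$; since the loop visits every node unless it returns early, it must reach such a state (or an earlier bad one) and return \False. Hence $\Call{IsAttractor}{w,P}=\True$ if and only if $P$ is a word attractor of $w$. The main obstacle, I expect, is the endpos decomposition along $\mathcal T$ — in particular arguing cleanly that at a prefix state the prefix's ending position is the only fresh element and is absent from the children's sets — together with the ``no overwriting'' point about distinct prefixes; the remaining control-flow reasoning is routine.
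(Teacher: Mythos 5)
Your proposal is correct and follows essentially the same route as the paper: both establish that $d[q]=d_P(q)$ by relating $\EP(q)$ to the prefix states lying below $q$ in the suffix-link tree, and then conclude via Lemma~\ref{lemma.distance}. The only difference is presentational — you use a local one-step decomposition of $\EP(q)$ into the children's sets plus the native prefix position and propagate it by induction along the bottom-up traversal, whereas the paper states the global characterisation of $\EP(q)$ via iterated suffix links directly; your version is somewhat more detailed (e.g., the no-overwriting observation for the first loop) but proves the same invariant.
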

\begin{proof}
Let $\mathcal{S}$ be the Suffix automaton of $w$, and $T$ the associated tree of suffix links.
We show that, if $q \neq I$, at the end of the execution of \Algo{IsAttractor}, 
one has $d[q] = d_P(q)$.

At the end of the execution, for each state $q \neq I$, one has
\[
d[q] = \min \{d(j) \mid \exists i \geq 0 \text{ such that } F^i[r] = q \text{, where } r= \delta(I, w_{[1,j]})\}.
\]
Note that, if $q$ is a leaf of $T$, then it is also a prefix state, and $d[q] = d(j)$,
where $q = \delta(I, w_{[1,j]})$,
at the end of the execution.

By definition, we have
\[
d_P(q) = \min \{d_P(j) \mid j \in \EP(q)\}.
\]
{
Let $r$ be the prefix state $\delta(I, w_{[1, j]})$. 
For each $u$ such that $\delta(I, u) = q$, 
$u$ is a suffix of $w_{[1, j]}$ for each $j \in \EP(u)$.
Thus, for each $j \in \EP(u)$, there is a nonnegative integer $i$ such that $F^{i}[r] = q$.
}
Conversely, if $r$ is a prefix state $\delta(I, w_{[1, j]})$ such that $F^i[r] = q$ for some $i \geq 0$,
then $\EP(w_{[1,j]}) \subseteq \EP(u)$ for each factor $u$ of $w$ with $\delta(I, u) = q$.
This implies that $u$ is a suffix of $w_{[1,j]}$.
Thus, 
\[
{
d_P(q) = \min \{d_P(j) \mid \exists i \geq 0 \text{ such that } F^i[r] = q \text{ where } r = \delta(I, w_{[1,j]})\}.
}
\]
As a consequence, at the end of the execution, $d[q] = d_P(q)$ for each state $q \neq I$. 
The conclusion follows from Lemma~\ref{lemma.distance}.
\end{proof}

\begin{proposition}
The algorithm \Algo{IsAttractor} can be implemented to run in linear time with respect to $|w|$ if the Suffix automaton of $w$ is available.
\end{proposition}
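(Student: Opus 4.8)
The plan is to bound the cost of each of the three phases of \Algo{IsAttractor} separately, relying on the classical fact (Blumer et al.) that the Suffix automaton $\mathcal{S}(w)$ of a word $w$ of length $n$ has $O(n)$ states and $O(n)$ transitions, and that together with its construction one obtains in $O(n)$ extra time the suffix-link table $F$, the tree $\mathcal{T}$ it induces, the minimal lengths $\ell$ (each $\ell[q]$ being $0$ for $q=I$ and $L[F[q]]+1$ otherwise), and the table $T$ of prefix states (filled by walking the prefix path $I,\ \delta(I,w_1),\ \delta(I,w_{[1,2]}),\dots$, which visits each prefix state exactly once). I would also first turn $P$ into a Boolean array indexed by $\llbracket 1,n\rrbracket$, in $O(n+|P|)=O(n)$ time, so that each test $j\in P$ costs $O(1)$.

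With these preliminaries in place, the first \textbf{for} loop of \Algo{IsAttractor} is $O(n)$ since it touches each of the $O(n)$ states once. The second \textbf{for} loop performs exactly $n$ iterations, each doing one $O(1)$ membership test, at most one assignment $k\gets j$, and one assignment $d[T[j]]\gets j-k$ through the precomputed table $T$; hence this phase is $O(n)$ as well. At its end $d[q]$ holds $d_P(j)$ for the unique $j$ with $T[j]=q$ when $q$ is a prefix state, and $+\infty$ otherwise, which is precisely the configuration the last phase expects.

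The last phase traverses $\mathcal{T}$ bottom-up. A valid order can be produced in $O(n)$ time: since $F(r)=q$ forces $L[q]<L[r]$, sorting the states by non-increasing $L[q]$ with a counting sort (keys lying in $\llbracket 0,n\rrbracket$) yields such an order; equivalently, one may run a depth-first search of $\mathcal{T}$ from $I$ along the reversed suffix links. For a node $q$, the update in line~\ref{algo1-line-11} inspects $d[r]$ for every child $r$ of $q$ in $\mathcal{T}$, and the test in line~\ref{algo1-line-12} costs $O(1)$. Summed over all nodes, the number of child inspections equals the number of edges of $\mathcal{T}$, namely $|Q|-1=O(n)$; so the whole loop runs in $O(n)$ time, regardless of whether it stops early or completes. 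Adding the three phases gives the claimed $O(n)$ bound.

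The step that needs the most care is this last one: one has to (i) exhibit a linear-time procedure producing a bottom-up ordering of the suffix-link tree, and (ii) observe that the seemingly costly minimisation in line~\ref{algo1-line-11} is linear overall only once its cost is amortised against the edges of $\mathcal{T}$ rather than charged per node — the remaining phases are routine.
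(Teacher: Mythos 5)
Your proposal is correct and follows essentially the same route as the paper: the linear bound comes from the $O(n)$ size of the Suffix automaton and from charging the minimisation in line~\ref{algo1-line-11} to the edges of the suffix-link tree, each traversed once. You simply spell out a few implementation details the paper leaves implicit (the Boolean array for $P$ and the counting-sort/DFS production of a bottom-up order), which does not change the argument.
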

\begin{proof}
The DAWG $\mathcal{S}(w)$ of a word $w$, together with the tables $F$, $\ell$ and $T$, can be computed 
in time $O(n)$, where $n$ is the length of $w$, for a fixed size alphabet~\cite{BlumerBHECS85}.
It becomes $O(n \log |A|)$ for a general ordered alphabet. It is done in $O(n)$ in~\cite{FujishigeTIBT23} for an integer alphabet of polynomial size. 
The $O(n)$ time complexity can also be reached for integer alphabets of polynomial size 
using sparse list implementation (see for instance~\cite[Exercise 2.12 p. 71]{AhoHU74}).

This running time after the computation of DAWG and its attributes is $O(n)$.
It is clear after noting that both:
\begin{itemize}\topsep0pt\parskip0pt
\item each suffix link is traversed only once at line~\ref{algo1-line-11},
\item there are less than $2|w|$ suffix links, the number of states of $\mathcal{S}(w)$ whose maximum is $2|w|-1$ (see~\cite[page 200]{CHL07cup}).
\end{itemize}
Thus, the time complexity of \Algo{IsAttractor}, assuming that the DAWG is available, is 
$O(n)$ independently of the alphabet size. The space complexity is the same as the space complexity of 
the DAWG.
\end{proof}

%---------------------------------------------------------------------------
\section{Greedy attractor}
\label{sect-getting}

Finding the optimal string attractor for a word is known to be NP-complete~\cite{KempaP18}. 
Thus, it is natural to consider approximations and heuristic algorithms. 
A general greedy quadratic algorithm to compute a small attractor is given in~\cite[Definition 19]{SchaefferShallit2024}. 
In this section, we design a greedy heuristic algorithm to find a small attractor for a given word $w$. On some families of words the algorithm outputs very small attractors and sometimes even the smallest. 
The running time is linear with respect to the word length if the size of the attractor is bounded and can be made linear with a technical improvement.

The main idea is to use the DAWG of $w$ and, in some sense, to reverse the process of Algorithm \Algo{IsAttractor}. 

\subsection*{Informal presentation}
The algorithm's design employs an approach based on certain observations regarding the properties of the DAWG $\mathcal{S}(w)$.

For instance, in Figure~\ref{figure.links}, the leaves are $2, 3, \ldots, 7$ only.

\paragraph{Note 1}
To get an attractor $P$, for each leaf $j$ of $\mathcal{T}$, one must have $d_P(j) < \ell[j]$, which means that there is a position $i\in P$ with both $i\leq j$ and $j-i<\ell[j]$. 
Several leaves can share the same such position $i$.

For instance, for the leaf $6$ of $\mathcal{T}$ in Figure~\ref{figure.links}, 
at least one of the positions $4, 5, 6$ should be in $P$. 
Indeed, for this example, the unique occurrence of the factor $\texttt{bab}$ of $\texttt{aabbabb}$ covers only these positions.

\paragraph{Note 2}
The indegree of states in $\mathcal{T}$ that are not prefix states is at least $2$. Actually, during the construction of $\mathcal{S}(w)$, when a state $q$ is cloned into the state $q'$, the latter gets two suffix links pointing to it. The clone state $q'$ is an internal node of $\mathcal{T}$, and all clone states are accessible by suffix paths from leaves.

For instance, in Figure~\ref{figure.links}, the states $3'$, $3''$, and $4''$ have indegree $2$ in $\mathcal{T}$.

\paragraph{Heuristics}
The heuristic consists in selecting successive positions on the word that are candidates for being in the attractor. As explained in the following, this is done in a greedy manner scanning the word from the end to the beginning of the word. 

For the running example of the word $\texttt{aabbabb}$,
the drawing below displays the values $\ell[q]$ for all states of the tree $\mathcal{T}$ in Figure~\ref{figure.links}. The length of each horizontal bar replicates the value $\ell[q]$ associated with a state $q$. The bar itself corresponds to the shortest word $u$, where $|u|=\ell[q]$, that satisfies $\delta(I,u)=q$ in the DAWG. 

\hspace{.9mm}
    \begin{tikzpicture}[node distance=12mm,>=stealth] %Thue-Morse
      \path (0.6,.2) node (51) {$\tta$};
      \node[right of=51] (52) {$\tta$};
      \node[right of=52] (53) {$\ttb$};
      \node[right of=53] (54) {$\ttb$};
      \node[right of=54] (55) {$\tta$};
      \node[right of=55] (56) {$\ttb$};
      \node[right of=56] (57) {$\ttb$};
      \node[initial] (0) {$0$};
      \node[right of=0] (1) {$1$};
      \node[right of=1] (2) {$2$};
      \node[right of=2] (3) {$3$};
      \node[right of=3] (4) {$4$};
      \node[right of=4] (5) {$5$};
      \node[right of=5] (6) {$6$};
      \node[right of=6] (7) {$7$};
      \node[below of=3] (8) {$3''$};
      \node[below of=8] (9) {$3'$};
      \node[right of=9] (10) {$4''$};
    \path (0,-.5) node (100) {($\ell[q]$)}; 
    \node[right of=100] (101) {($1$)}; 
    \node[right of=101] (102) {($2$)}; 
    \node[right of=102] (103) {($3$)}; 
    \node[right of=103] (104) {($4$)}; 
    \node[right of=104] (105) {($2$)}; 
    \node[right of=105] (106) {($3$)}; 
    \node[right of=106] (107) {($4$)}; 
    \node[below of=103] (108) {($2$)}; 
    \node[below of=108] (109) {($1$)}; 
    \node[right of=109] (110) {($2$)}; 
    \draw (0.3,-3.2) -- (1.5,-3.2); % 1
    \draw (0.3,-3.4) -- (2.7,-3.4); % 2
    \draw (0.3,-3.6) -- (3.9,-3.6); % 3
    \draw (1.5,-3.8) -- (3.9,-3.8); % 11
    \draw (2.7,-4.0) -- (3.9,-4.0); % 13
    \draw (0.3,-4.2) -- (5.1,-4.2); % 4
    \draw (2.7,-4.4) -- (5.1,-4.4); % 14
    \draw (3.9,-4.6) -- (6.3,-4.6); % 5
    \draw (3.9,-4.8) -- (7.5,-4.8); % 6
    \draw (3.9,-5.0) -- (8.7,-5.0); % 7
\end{tikzpicture}

\medskip
The heuristics proceeds by scanning the word backward in a certain number of steps $j_1 > j_2 > \cdots > j_m$ corresponding to some positions on the word.
For each step $j$, the heuristics consists in choosing
the smallest position $k$, $0 \leq k \leq j$, for which all positions $i$ with $k \leq i \leq j$, is covered by a bar associated with $i$. 
Informally, this means that $k$ is covered by as many bars as possible among those that do not yet cover a position that has already been selected.
Choosing more bars for each selected position globally reduces the number of positions to be chosen from, as well as the eventual attractor. 

In the running example, it is clear that, when $j=7$, the first selected position is ideally position $4$, covered by $5$ bars. This implies that the condition on distances associated with states $4$, $5$, $6$, and $7$ will hold true. 

The situation is more interesting when dealing with the prefix $\texttt{aab}$. Looking roughly at the remaining bars, position $3$ could be chosen because $\ell[3']=1$, which eventually leads to an attractor of $3$ positions. However, it is already known that the condition holds on state $3'$ (indeed, the position $3$ is covered only by the letter $\texttt{b}$ whose next occurrence covers position $4$ of the attractor). Therefore, this leads to discard the bar associated with states $3'$ and to choose position $2$ as the next position of the complete attractor instead. This is realised by Algorithm \Algo{GetAnAttractor} below and shown on the run of it with the DAWG of the above example word as input.

\subsection*{Algorithm description}
The algorithm \Algo{GetAnAttractor} calculates a sequence of contiguous intervals $(I_r)$, $r=1,2,\dots,m$, where
$I_r = \llbracket k_r , j_r \rrbracket$, and $k_r, j_r$ are positions on $w$ with $j_{r+1} = k_r - 1$, except for $r=m$.
It also computes a sequence of arrays $(d_r)$, $r=1,2,\dots,m$, indexed by the set of states of $\DAWG(w)$, that satisfies
\begin{equation*}
k_r = \max\{\pos[q] -\ell[q] + 1 \mid \pos[q] \in I_r \text{ and } d_{r-1}[q] \geq \ell[q]\}.
\end{equation*}
The algorithm outputs the sequence $(k_r)$, $r=1,2,\dots,m$, as the found attractor.

The algorithm proceeds by iterating three main steps for each iteration $r$.
The first step, called \textit{stacking}, consists in computing $k_r$ for the current position $j_r$ as suggested above. The second step adds it to the future attractor. The third step consists in \textit{ updating} $d$, the table of distances on states. 
This is done as follows: the distances between $k_r$ and all the positions in $I_r = \llbracket k_r,j_r \rrbracket$ are calculated and propagated to all concerned states using their suffix links, as done in the algorithm of Section~\ref{sect-checking}.
The process resumes at position $k_r-1$, which becomes the value $j_{r+1}$.
The distance table obtained at the end of the iteration $r$ is called~$d_r$.

\bigskip
In Algorithm \Algo{GetAnAttractor} below, after initialisations on lines~\ref{algo6-line-1}-\ref{algo6-line-4}, the iteration controlled by the variable $j$ executes the three steps, stacking in lines~\ref{algo6-line-6}-\ref{algo6-line-12}, adding in lines~\ref{algo6-line-13} and updating in lines~\ref{algo6-line-14}-\ref{algo6-line-22}.

The whole run of \Call{GetAnAttractor}{\mathcal{S}(w)} is driven by a traversal of the prefix states of $\mathcal{S}(w)$ done backward, that is, from the last state to the initial state, inside the main while loop. 

Note 1 is applied in lines~\ref{algo6-line-6}-\ref{algo6-line-12} to find the next position to be added to the future attractor. The choice of the value $k$ ensures that the condition $d[q]<\ell[q]$ will be satisfied for all the considered prefix states.

Finally, in lines~\ref{algo6-line-14}-\ref{algo6-line-22}, following Note 2, the table $d$ is partially completed both for the considered prefix states and for their parents in the suffix tree using the suffix link table $F$.

Meanwhile, the algorithm stores the states $q$ for which the condition $d[q]<\ell[q]$ is met (variable $\DONE$), and stops if this condition holds for all states. Then, it outputs the set $P$ of positions, which is an attractor of the word as shown in Proposition~\ref{proposition.correction}.

The input of Algorithm \Algo{GetAnAttractor} is the DAWG $\mathcal{S}(w)$, which is given with these elements:
\begin{itemize}\topsep0mm\parskip0mm
\item
$\mathcal{T}$ its tree of suffix links; 
\item
$F$ the suffix links table (parent links of the tree);
\item
$T$ is the table of prefix states  ($T[i] = \delta(I,w_{[1,i]})$);
\item
$\pos$ position table of states ($\pos[q]$ is the smallest end position of any  factor $u$ of $w$ with $q=\delta(I,u)$).
\end{itemize}

\begin{algo}{GetAnAttractor}{\mathcal{S}(w)}
 \DOFOR"{all states $q$ of $\mathcal{S}(w)$} \label{algo6-line-1}
   \SET{d[q]}{\infty}
 \OD
 \SET{(P,\DONE,j)}{(\emptyset,\{I\},|w|)} \label{algo6-line-4}
 \DOWHILE{j > 0}
   \SET{(i,k)}{(j,1)} \RCOM{6}{stacking down from $j$} \label{algo6-line-6}
   \DOWHILE{k \leq i}                        \label{algo6-line-7}
     \SET{q}{T[i]}
     \DOWHILE{\pos[q]=i}
       \IF{d[q]\geq\ell[q]}  \label{algo6-line-9}
         \SET{k}{\max\{k,i-\ell[q]+1\}}
       \FI
       \SET{q}{F[q]} \label{algo6-line-11}
     \OD
     \SET{i}{i-1}  \label{algo6-line-12}
   \OD   
   \SET{P}{P \cup \{k\}} \RCOM{6}{adding to attractor} \label{algo6-line-13}
   \DOFORI{i}{k}{j}  \RCOM{6}{updating $d[q]$'s upward from $q=T[i]$}      \label{algo6-line-14}
     \SET{(e,q)}{(i-k,T[i])}
     \DOWHILE{q \neq I \mbox{ and } e < d[q]}    \label{algo6-line-16}
               \SET{d[q]}{e}                     \label{algo6-line-17}
       \IF{d[q]<\ell[q]}              
         \SET{\DONE}{\DONE\cup\{q\}}
         \IF{|\DONE|=|\mathcal{S}(w)|}            \label{algo6-line-19}
            \RETURN{P} 
         \FI
       \FI
       \SET{q}{F(q)}        \label{algo6-line-22}
     \OD
        \OD
   \SET{j}{k-1}           \label{algo3-line-23}
 \OD
\end{algo}

\begin{proposition} \label{proposition.correction}
The algorithm \Algo{GetAnAttractor} applied to the DAWG $\mathcal{S}(w)$ of the word $w$ outputs an attractor of $w$.
\end{proposition}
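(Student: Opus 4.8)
The plan is to deduce the statement from Lemma~\ref{lemma.distance}: writing $P$ for the set eventually returned by \Algo{GetAnAttractor}, it is enough to prove that $d_P(q)<\ell[q]$ for every state $q\ne I$ of $\mathcal{S}(w)$. I would organise the argument around the way each iteration $r$ of the main loop ``touches'' states in terms of their positions $\pos[q]$. First I would fix the geometry: the intervals $I_r=\llbracket k_r,j_r\rrbracket$ are contiguous, with $j_1=|w|$ and $j_{r+1}=k_r-1$, and every $k_r\ge1$ because the only values fed to the \texttt{max} in the stacking step are $i-\ell[q]+1$ with $\pos[q]=i$, and $\ell[q]\le\pos[q]$ (the shortest factor of $q$, of length $\ell[q]$, occurs ending at $\pos[q]$). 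Consequently the $I_r$ exhaust $\llbracket 1,|w|\rrbracket$, so the position $\pos[q]$ of every state $q\ne I$ falls in exactly one interval $I_r$. Two small facts about the stacking loop of iteration $r$ are also needed: its control variable $i$ runs through exactly $j_r,j_r-1,\dots,k_r$ (the loop stops precisely when $k$ first coincides with $i$, and the final value of $k$ is $k_r$); and, for each such $i$, the inner \texttt{while}~$\pos[q]=i$ loop visits exactly the states $q$ with $\pos[q]=i$. The latter holds because the states containing a factor ending at position $i$ are precisely those on the suffix path from $T[i]$, and $\pos$ is non-increasing along that path (if $F[q]=q'$ then $\EP(q)\subseteq\EP(q')$, whence $\pos[q']\le\pos[q]$).

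Next I would set up the bookkeeping on $d$ and $\DONE$. Every assignment $d[q]\leftarrow e$ on line~\ref{algo6-line-17} uses $e=i-k_r$ for a position $i\in I_r$ lying in $\EP(q)$ and with $k_r\in P$, and a state is put into $\DONE$ exactly when such an assignment makes $d[q]<\ell[q]$; moreover $d$ stays non-increasing along suffix links, since the update loop walks \emph{up} a suffix path writing one value $e$ and stops only at a state already holding a value $\le e$. From the first two points, if $q$ is ever in $\DONE$ then at some instant $d[q]=i-k_r<\ell[q]$ with $i\in\EP(q)$ and $k_r\in P$; since $k_r\le i$ we get $d_P(i)\le i-k_r$, and because $\EP(q)$ is the common ending-position set of the words of $q$, $d_P(q)\le d_P(i)\le i-k_r<\ell[q]$. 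So it remains to show that every state $q\ne I$ is eventually added to $\DONE$; this in particular forces $|\DONE|=|\mathcal{S}(w)|$, so the test on line~\ref{algo6-line-19} triggers the \texttt{return} and the algorithm indeed outputs a set.

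To prove that, fix $q\ne I$ and the iteration $r$ with $\pos[q]\in I_r$ (if the algorithm returns before reaching iteration $r$, then all states, including $q$, are already in $\DONE$). By the facts above, the stacking loop processes $i=\pos[q]$ and the inner loop visits $q$. If $d[q]<\ell[q]$ at that moment, $q$ is already in $\DONE$. Otherwise the guard on line~\ref{algo6-line-9} fires, raising $k$ to at least $\pos[q]-\ell[q]+1$, so $k_r\ge\pos[q]-\ell[q]+1$. Then, when the update loop reaches $i=\pos[q]\in\llbracket k_r,j_r\rrbracket$, it sets off up the suffix path from $T[\pos[q]]$ --- which contains $q$ --- carrying $e=\pos[q]-k_r\le\ell[q]-1<\ell[q]$; by monotonicity of $d$ along suffix links, either the walk reaches $q$ and drives $d[q]$ below $\ell[q]$, or it halts earlier at a descendant of $q$ already holding a value $\le e$, forcing $d[q]\le e<\ell[q]$ as well. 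In either case $q\in\DONE$ by the end of iteration $r$. Together with Lemma~\ref{lemma.distance}, this yields the claim.

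The step I expect to be the main obstacle is this last piece of bookkeeping: making rigorous that ``walk up the suffix path while $e<d[q]$'' combined with the invariant ``$d$ is non-increasing along suffix links'' really does guarantee that every state whose position has been ``claimed'' by the stacking step of iteration $r$ has its distance pushed below $\ell[q]$, even when the update loop physically stops short of that state. A closely related point that must be handled with care is the direction of the inequality produced by the greedy choice of $k_r$: the stacking yields $k_r\ge\pos[q]-\ell[q]+1$ --- not the reverse inequality --- and it is exactly this that makes $e=\pos[q]-k_r$ strictly smaller than $\ell[q]$, which is what the distance condition of Lemma~\ref{lemma.distance} requires.
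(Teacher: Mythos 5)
Your proof is correct and follows essentially the same strategy as the paper's: an induction over the main-loop iterations showing that the greedy choice $k_r\ge \pos[q]-\ell[q]+1$ forces $d_P(q)<\ell[q]$ for every state whose position falls in $I_r$, combined with the semantics of the table $d$, and concluding via Lemma~\ref{lemma.distance}. You are in fact somewhat more careful than the paper on two low-level points --- that the early-stopping update walk cannot ``miss'' a state (your monotonicity of $d$ along suffix links) and that the algorithm actually reaches its \texttt{return} --- but the underlying argument is the same.
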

\begin{proof}
Before the first step in the main while loop, the initial distance table $d_0$ is set, for each state $q$, as $d_0[q] = + \infty$.
At the step $1$, $j_1= |w|$.
For each step $r$, 
\begin{equation*}
k_r = \max\{\pos[q] -\ell[q] + 1 \mid \pos[q] \in I_r \text{ and } d_{r-1}[q] \geq \ell[q]\},
\end{equation*}
where $d_r$ is the value of table $d$ at the end of step $r$.

Let $P_r = (k_r, k_{r-1}, \ldots, k_1)$ be the set of positions already computed at the end of step $r$ and stored in $P$.
We say that a state $q$ is $r$-checked if the shortest factor of $q$ covers a position in $P_r$.

Let us show the following loop invariants. For each $r$, $1\leq r \leq m$,
\begin{itemize}
\item $d_r[q] = \min \{d_{P_r}(u) \mid u \text{ in } q\}$,
where $d_{P_r}(u) = \min\{ d_{P_r}(j) \mid j \in \EP(u)\}$;
\item for each state $q$ for which $\pos(q) \geq k_r$, $q$ is $r$-checked.
\end{itemize}
Assume by induction that the invariants hold for all steps less than $r$.

For each state $q$ whose position satisfies $\pos[q] \geq k_r$, either 
$\pos[q] \geq k_{r-1}$, and $q$ is $(r-1)$-checked, and thus $r$-checked, or
$\pos[q] \in I_r$. In the latter case, by definition of $k_r$,  
either $d_{r-1}[q] < \ell[q]$ or $\pos[q] -\ell[q] + 1 \leq k_r$.
If $d_{r-1}[q] < \ell[q]$, then $q$ is $(r-1)$-checked, and thus $r$-checked.
If $\pos[q] -\ell[q] + 1 \leq k_r$, then $|k_r - \pos[q]| < \ell[q]$,
and thus $q$ is $r$-checked. 

Assume by induction again that $d_{r-1}[q] = \min \{d_{P_{r-1}}(u) \mid u \text{ in } q\}$.
After the update of the distance table, at the end of step $r$, we get
$d_{r}[q] = \min \{d_{r-1}[q], \min\{ d_{k_r}(j) \mid j \in \EP(u) \cap I_r\}\}$.
Hence, $d_{r}[q] = \min \{d_{P_r}(u) \mid u \text{ in } q\}$.

Finally, if the algorithm stops at step $m$, all states $q$ are $m$-checked, which implies that $P_m$ is an attractor.
\end{proof}

\begin{proposition}  \label{proposition.complexity}
The execution of \Call{GetAnAttractor}{\mathcal{S}(w)} that produces a word attractor $P$ runs in time $O(|w|\cdot|P|)$.
\end{proposition}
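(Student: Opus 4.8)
The claim is that a successful run of \Algo{GetAnAttractor} terminates in time $O(|w|\cdot|P|)$, where $P$ is the produced attractor. I would prove this by bounding the total work of the algorithm iteration by iteration: there are exactly $m = |P|$ iterations of the outer \texttt{while} loop (each iteration adds precisely one new position $k_r$ to $P$ on line~\ref{algo6-line-13}), and I will argue that each single iteration costs $O(|w|)$ in the worst case. Summing over the $m$ iterations then gives $O(m\cdot|w|) = O(|P|\cdot|w|)$. The three substeps — stacking (lines~\ref{algo6-line-6}--\ref{algo6-line-12}), adding (line~\ref{algo6-line-13}), updating (lines~\ref{algo6-line-14}--\ref{algo6-line-22}) — must each be bounded.

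\textbf{Step 1: bounding one stacking phase.} In iteration $r$ the inner \texttt{while} on line~\ref{algo6-line-7} decrements $i$ from $j_r$ down to $k_r$, so it runs over the positions of $I_r$; for each such $i$ it walks the suffix chain $q = T[i], F[q], F[F[q]], \dots$ for as long as $\pos[q] = i$. The number of states $q$ with a given value $\pos[q] = i$ is $O(1)$ amortised over all $i$ — indeed $\sum_i \#\{q : \pos[q] = i\}$ is the total number of states of $\mathcal{S}(w)$, which is at most $2|w|-1$ — but within a \emph{single} iteration this sum is still $O(|I_r|) + O(\#\text{states visited})$. More simply, I would observe that across one iteration the stacking phase visits each state $q$ with $\pos[q]\in I_r$ at most once, and there are at most $2|w|-1$ states total, so the stacking phase costs $O(|w|)$; in fact it costs $O(|I_r| + s_r)$ where $s_r$ is the number of states with position in $I_r$, and $\sum_r s_r \le 2|w|-1$, which already gives a stronger bound, but $O(|w|)$ per iteration suffices for the stated claim.

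\textbf{Step 2: bounding one updating phase.} The \texttt{for} loop on line~\ref{algo6-line-14} ranges over $i \in I_r$, and for each $i$ the inner \texttt{while} on line~\ref{algo6-line-16} walks the suffix chain from $T[i]$ upward, stopping as soon as it reaches $I$ or a state $q$ with $d[q] \le e$. The key point is the monotonicity of this stopping condition: since $e = i - k_r$ increases with $i$ within the loop, and since $d[q]$ is only ever \emph{decreased}, once a state $q$ on the suffix path has been assigned a value $d[q]$ it will not be re-descended-through in a later pass of the same \texttt{for} loop unless the new $e$ is smaller — but $e$ is increasing, so each state is updated at most once per iteration $r$. (I would state this carefully as: within iteration $r$, each state $q$ has its $d[q]$ written at most once.) Hence the updating phase also visits $O(|w|)$ states, and each visit is $O(1)$ work (comparison, assignment, the $\DONE$ bookkeeping on lines~\ref{algo6-line-17}--\ref{algo6-line-22}), for a total of $O(|w|)$ per iteration.

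\textbf{Step 3: assemble.} Adding to the attractor (line~\ref{algo6-line-13}) and the reassignment $j \leftarrow k-1$ (line~\ref{algo3-line-23}) are $O(1)$. The initialisation on lines~\ref{algo6-line-1}--\ref{algo6-line-4} is $O(|w|)$, done once. So each of the $m=|P|$ iterations is $O(|w|)$, giving total time $O(|P|\cdot|w|)$. \textbf{The main obstacle} is Step 2: making rigorous the claim that within a single outer iteration the updating \texttt{while} loop touches each state only $O(1)$ times. This relies on the interaction between the strictly increasing values of $e$ as $i$ ranges over $I_r$ and the early-exit test $e < d[q]$, together with the fact that suffix links strictly decrease $\ell$, so the chains being walked are genuinely tree paths toward the root and cannot loop; I would want to spell out that if a state $q$ is reached with value $e$ and then reached again with value $e' > e$ in the same iteration, the test $e' < d[q] = e$ fails and the walk halts there, so no state is processed twice. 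Everything else is routine counting against the $2|w|-1$ bound on the number of states.
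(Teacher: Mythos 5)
Your proposal is correct and follows essentially the same route as the paper: the same decomposition into the stacking and updating phases, the same appeal to the $2|w|-1$ bound on the number of states of $\mathcal{S}(w)$, and the same key observation that within one outer iteration the values $e=i-k$ increase while $d[q]$ only decreases, so the early-exit test $e<d[q]$ ensures each state is written at most once per iteration, giving $O(|w|)$ per iteration and $O(|w|\cdot|P|)$ overall. The only (harmless) difference is that the paper charges the stacking phase globally as $O(|w|)$ over all iterations via the disjointness of the intervals $I_r$, whereas you bound it per iteration — a weaker but sufficient estimate, and you note the amortised version anyway.
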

\begin{proof}
First, recall that the DAWG size is linear with respect to $|w|$. More precisely, the total number of suffix links in $\mathcal{S}(w)$ is less than $2|w|-1$ (see the size of a suffix automaton in~\cite[page 200]{CHL07cup}).

To prove the expected running time, we evaluate the total runtime of all the executions of the while loop in  lines~\ref{algo6-line-7}-\ref{algo6-line-12} and the total runtime of all the executions of the for loop in lines~\ref{algo6-line-14}-\ref{algo6-line-22}.

For the while loop, 
notice that, at step $r$, the position $k_r$ is searched in the interval $\llbracket j -\ell[T[j]] + 1, j \rrbracket$ and the intervals $I_r = \llbracket k_r , j_r \rrbracket$ are disjoint.
Therefore, the total runtime is $O(|w|)$ because the suffix links of $\mathcal{T}$ considered in line~\ref{algo6-line-11} are traversed only once.

As for the for loop, note that the values of $e$ for a given pair $(k,j)$ are successively $0$, $1$, \dots, $j-k$ (in increasing order). Therefore, if for some state $q$, $d[q]$ is updated to $e$, it will not be updated again with a next value of $e$ because the condition $e < d[q]$ will not hold, and the while loop stops. Overall, for a given pair $(k,j)$, this step is executed in linear time. 

However, $d[q]$ may be updated again in a subsequent step for another pair $(k,j)$.
Since each pair is associated with an element of the attractor, the total running time is then $O(|w| \cdot|P|)$.
\end{proof}

Experiments tend to show that the running time of the computation is close to linear time. Nevertheless, a simple change could make it work in linear time: count accesses to state $q$ in line~\ref{algo6-line-17} and executes instruction in line~\ref{algo6-line-22} only if the number of accesses to $q$ equals its indegree in the tree $\mathcal{T}$; else change $q$ to $I$ to stop the while loop in lines~\ref{algo6-line-16}-\ref{algo6-line-22}
This way, the suffix link from $q$ to $F[q]$, for any $q$, is followed only when necessary. However, the trick is likely to result in a larger attractor.

\subsection*{Examples of outputs}

A simple verification shows that \Call{GetAnAttractor}{\mathcal{S}(\texttt{a}^n)}, $n>0$, outputs the attractor $\{1\}$ as expected. \Call{GetAnAttractor}{\mathcal{S}(\texttt{ba}^n)} outputs the smallest attractor $\{1,2\}$ as expected as well. The next example also shows the good behavior of the algorithm to exhibit a small attractor.

\subsubsection*{Running the example word}

Below is an execution of \Algo{GetAnAttractor} on the running example word \texttt{aabbabb} using the tree of suffix links (Figure~\ref{figure.links}) of its DAWG (Figure~\ref{figure.DAWG}).

The nodes of the suffix links tree are labeled by pairs $(\ell,d)$. Initially,
the potential attractor is empty, $P = \emptyset$, and values $d[q]$ are set to $\infty$ for all states $q\neq I$.

\medskip
$P = \emptyset$. Initialisation of pairs $(\ell,d)$.

% Suffix tree Example aabbabb
    \begin{tikzpicture}[node distance=12mm,>=stealth] %Thue-Morse
      \path (.6,.2) node (51) {$\tta$};
      \node[right of=51] (52) {$\tta$};
      \node[right of=52] (53) {$\ttb$};
      \node[right of=53] (54) {$\ttb$};
      \node[right of=54] (55) {$\tta$};
      \node[right of=55] (56) {$\ttb$};
      \node[right of=56] (57) {$\ttb$};
      \node[initial] (0) {$0$};
      \node[right of=0] (1) {$1$};
      \node[right of=1] (2) {$2$};
      \node[right of=2] (3) {$3$};
      \node[right of=3] (4) {$4$};
      \node[right of=4] (5) {$5$};
      \node[right of=5] (6) {$6$};
      \node[right of=6] (7) {$7$};
      \node[below of=3] (8) {$3''$};
      \node[below of=8] (9) {$3'$};
      \node[right of=9] (10) {$4''$};
      \draw (1) edge[dashed, bend left]  (0);
      \draw (2) edge[dashed, bend left]  (1);
      \draw (3) edge[dashed] (8);
      \draw (4) edge[dashed] (10);
      \draw (5) edge[dashed, bend right=30] (1);
      \draw (6) edge[dashed, bend left=10] (8);
      \draw (7) edge[dashed, bend left=20] (10);
      \draw (8) edge[dashed] (9);
      \draw (9) edge[dashed, bend left] (0);
      \draw (10) edge[dashed, bend left] (9);
    \path (0,-.5) node (100) {($\ell$,$d$)};
    \node[right of=100] (101) {(1,$\infty$)};
    \node[right of=101] (102) {(2,$\infty$)};
    \node[right of=102] (103) {(3,$\infty$)};
    \node[right of=103] (104) {(4,$\infty$)};
    \node[right of=104] (105) {(2,$\infty$)};
    \node[right of=105] (106) {(3,$\infty$)};
    \node[right of=106] (107) {(4,$\infty$)};
    \node[below of=103] (108) {(2,$\infty$)};
    \node[below of=108] (109) {(1,$\infty$)};
    \node[right of=109] (110) {(2,$\infty$)};
    \end{tikzpicture}

$P = \{4\}$. Update of values $d[q]$ when position 4 is added to $P$

%\noindent % Suffix tree Example aabbabb
    \begin{tikzpicture}[node distance=12mm,>=stealth] % (scale=.2}] %Thue-Morse
      \path (.6,.2) node (51) {$\tta$};
      \node[right of=51] (52) {$\tta$};
      \node[right of=52] (53) {$\ttb$};
      \node[right of=53] (54) {$\textcolor{red}{\ttb}$};
      \node[right of=54] (55) {$\tta$};
      \node[right of=55] (56) {$\ttb$};
      \node[right of=56] (57) {$\ttb$};
      \node[initial] (0) {$0$};
      \node[right of=0] (1) {$1$};
      \node[right of=1] (2) {$2$};
      \node[right of=2] (3) {$3$};
      \node[right of=3] (4) {$\textcolor{red}{\mathbf{4}}$};
      \node[right of=4] (5) {$5$};
      \node[right of=5] (6) {$6$};
      \node[right of=6] (7) {$7$};
      \node[below of=3] (8) {$3''$};
      \node[below of=8] (9) {$3'$};
      \node[right of=9] (10) {$4''$};
      \draw (1) edge[dashed, bend left]  (0);
      \draw (2) edge[dashed, bend left]  (1);
      \draw (3) edge[dashed] (8);
      \draw (4) edge[dashed] (10);
      \draw (5) edge[dashed, bend right=30] (1);
      \draw (6) edge[dashed, bend left=10] (8);
      \draw (7) edge[dashed, bend left=20] (10);
      \draw (8) edge[dashed] (9);
      \draw (9) edge[dashed, bend left] (0);
      \draw (10) edge[dashed, bend left] (9);
    \path (0,-.5) node (100) {($\ell$,$d$)};
    \node[right of=100] (101) {(1,1)};
    \node[right of=101] (102) {(2,$\infty$)};
    \node[right of=102] (103) {(3,$\infty$)};
    \node[right of=103] (104) {(4,0)};
    \node[right of=104] (105) {(2,1)};
    \node[right of=105] (106) {(3,2)};
    \node[right of=106] (107) {(4,3)};
    \node[below of=103] (108) {(2,2)};
    \node[below of=108] (109) {(1,0)};
    \node[right of=109] (110) {(2,0)};
    \end{tikzpicture}

\noindent
Notice the role of the test in line ~\ref{algo6-line-9}: it enables state $3'$ to be discarded from the search for the next value, $k=2$. Without this test, the value of $k$ would be $3$ at the end of the loop, and another position would need to be added, since state 2 would still be unchecked.

\smallskip
$P = \{2,4\}$. Update of values $d[q]$ when position 2 is added to $P$.

%\noindent % Suffix tree Example aabbabb
    \begin{tikzpicture}[node distance=12mm,>=stealth] % (scale=.2}] %Thue-Morse
      \path (.6,.2) node (51) {$\tta$};
      \node[right of=51] (52) {$\textcolor{red}{\tta}$};
      \node[right of=52] (53) {$\ttb$};
      \node[right of=53] (54) {$\textcolor{red}{\ttb}$};
      \node[right of=54] (55) {$\tta$};
      \node[right of=55] (56) {$\ttb$};
      \node[right of=56] (57) {$\ttb$};
      \node[initial] (0) {$0$};
      \node[right of=0] (1) {$1$};
      \node[right of=1] (2) {$\textcolor{red}{\mathbf{2}}$};
      \node[right of=2] (3) {$3$};
      \node[right of=3] (4) {$\textcolor{red}{\mathbf{4}}$};
      \node[right of=4] (5) {$5$};
      \node[right of=5] (6) {$6$};
      \node[right of=6] (7) {$7$};
      \node[below of=3] (8) {$3''$};
      \node[below of=8] (9) {$3'$};
      \node[right of=9] (10) {$4''$};
      \draw (1) edge[dashed, bend left]  (0);
      \draw (2) edge[dashed, bend left]  (1);
      \draw (3) edge[dashed] (8);
      \draw (4) edge[dashed] (10);
      \draw (5) edge[dashed, bend right=30] (1);
      \draw (6) edge[dashed, bend left=10] (8);
      \draw (7) edge[dashed, bend left=20] (10);
      \draw (8) edge[dashed] (9);
      \draw (9) edge[dashed, bend left] (0);
      \draw (10) edge[dashed, bend left] (9);
    \path (0,-.5) node (100) {($\ell$,$d$)};
    \node[right of=100] (101) {(1,0)};
    \node[right of=101] (102) {(2,0)};
    \node[right of=102] (103) {(3,1)};
    \node[right of=103] (104) {(4,0)};
    \node[right of=104] (105) {(2,1)};
    \node[right of=105] (106) {(3,2)};
    \node[right of=106] (107) {(4,3)};
    \node[below of=103] (108) {(2,1)};
    \node[below of=108] (109) {(1,0)};
    \node[right of=109] (110) {(2,0)};
    \end{tikzpicture}

\noindent
Since the condition $d[q]<\ell[q]$ holds for all states $q\neq I$, $\{2,4\}$ is an attractor, obviously of minimum size.

\subsubsection*{Experimental results for known families of sequences}
%}
This section provides experimental results for some known families of sequences. In addition, it further explores the tests in~\cite{SchaefferShallit2024} with experiments carried out on much longer sequences. The implementation of the present algorithms used to carry out the experiments is available on GitHub~\cite{BealCrochemoreSoftware2025}.

A \emph{substitutive sequence} is a sequence $\sigma^\omega(a)$, where 
$\sigma\colon A^* \to A^*$ is a morphism, $a \in A$, and $\sigma(a) = au$, 
$\lim_{k \to + \infty} |\sigma^k(a)| = + \infty$. 

The \emph{Fibonacci sequence} is obtained with the morphism
$\sigma\colon a \to ab, b \to a$.
A smallest attractor has a size of 2 for a sufficiently long word~\cite{MantaciRRRS21} (see also~\cite{CrochemoreLR25addpbs}).

The \emph{Thue-Morse sequence} is obtained with the morphism
$\sigma\colon a \to ab, b \to ba$.
A smallest attractor has a size of 4~\cite{MantaciRRRS21} for a sufficiently long word (see also~\cite{CrochemoreLR25addpbs}).

The \emph{Period-Doubling sequence} is obtained with the morphism
$\sigma\colon a \to ab, b \to aa$.

The \emph{Chacon sequence} is obtained with the morphism
$\sigma\colon a \to aaba, b \to b$.

The \emph{de Bruijn sequence} of order $n$, here on a two-letter alphabet,
is a cyclic sequence in which every possible factor of length $n$
occurs exactly once. Its length is $2^n$. Its non-cyclic version has length $2^n+n-1$, and a smallest attractor has  a size close to $2^n/n$.

The \emph{Oldenburger–Kolakoski sequence}, sometimes also known as the  \emph{Kolakoski sequence},
is an infinite sequence of symbols $\{1,2\}$ that is the sequence of run lengths in its own run-length encoding.
Few things are known about this sequence. It is not known to be uniformly recurrent or that the frequency of $1, 2$ is $1/2$. See~\cite{BoissonJametMarcovici2024} for an analysis of some of its statistical properties.

The \emph{Powers of $2$ sequence} is the characteristic sequence of the powers of $2$, that is the infinite sequence 
$x= (x_n)_{n \geq 0}$ on the two-letter alphabet $\{0, 1\}$ such that $x_n  = 1$ if $n + 1$ is a power of $2$ and $0$ otherwise.

The following table gives in each column the size of the attractor computed by  \Algo{getAnAttractor} on all prefixes of size  $n = 2^i$, for $0 \leq i \leq 2^{21}$, of the infinite sequence corresponding to the column.
\begin{table}[!ht]
\begin{center}
\begin{tabular}{|| l | l | l | l | l ||} 
 \hline
\footnotesize{$n$} & \footnotesize{Fibonacci} & \footnotesize{Thue-Morse} & \footnotesize{Period-Doubling} & \footnotesize{Chacon}  \\ 
 \hline\hline
1 & 2 &  1&  1  & 1\\ 
 \hline
2 &  2&  2&  2  & 1\\ 
 \hline
 4 
 &  2&  2&  2 & 2 \\ 
 \hline
 8 &  2&  3&  2   &2 \\ 
 \hline
 16 &  2&  4&  2  & 3 \\ 
 \hline
 32 &  2&  5&  2  &  3\\ 
 \hline
64 &  2&  5&  2   & 4\\ 
 \hline
128 &  2&  5&  2   & 5\\ 
 \hline
256 &  2&  5&  2   & 5\\ 
 \hline
512 &  2&  5&  2   & 6\\
 \hline
1024 &  2&  5&  2   & 7\\ 
 \hline
2048 &  2&  5&  2   & 7\\
 \hline
 4096 &  2&  5&  2   & 8\\
 \hline
8192&  2&  5&  2   & 8\\
 \hline
16384 &  2&  5&  2   & 9\\
 \hline
32768 &  2&  5&  2   & 10\\
 \hline
65536 &  2&  5&  2   & 10\\
 \hline
131072 &  2&  5&  2   & 11\\
 \hline
262144 &  2&  5&  2   & 12\\
 \hline
524288 &  2&  5&  2   & 12\\
 \hline
1048576 &  2&  5&  2   & 13\\
 \hline
2097152 &  2&  5&  2   & 13\\ 
 \hline
\end{tabular}
\end{center}
\caption{\Algo{getAnAttractor}:  experimental results for linearly recurrent substitutive sequences.}
\label{table.1}
\end{table}

The results presented in Table~\ref{table.1} show the existence of attractors of size $O(1)$.
This aligns with the bound established in~\cite{SchaefferShallit2024}, as these sequences are linearly recurrent. The minimal attractor size is achieved for the Fibonacci and Period-Doubling sequences.

The following table gives in the de Bruijn column the size of the attractor of order $j$, for $0 \leq j \leq K=21$, computed by  \Algo{getAnAttractor}, of the de Bruijn word on a two-letter alphabet containing
all factors of length $j$. The other columns give 
the size of the computed attractor for each prefix of size 
$n = 2^i$, for $0 \leq i \leq 2^{K}$, of the infinite sequence corresponding to the column.
The random sequence of size $2^{K}$ is a pseudorandom sequence on two letters generated by the Java \texttt{Math.random} method.
\begin{table}[!ht]
\begin{center}
\begin{tabular}{|| l | l | l | l | l | l | l||} 
 \hline
\footnotesize{$n$} & \footnotesize{de Bruijn} & \footnotesize{Kolakoski} & \footnotesize{Pow. of $2$} & \footnotesize{Rand.} & \footnotesize{$\lfloor \log(n) \rfloor$} & \footnotesize{$\lfloor n/\log(n) \rfloor$} \\ 
 \hline\hline
1 &  -&1&1  &  1&  0& - \\ 
 \hline
2 &  2&2& 1 &  2&  0&  2   \\ 
 \hline
 4 &  2&2& 2 &  2&  1&  2    \\ 
 \hline
 8 &  3&3&  3&  2&  2&   3  \\ 
 \hline
 16 &  4&2& 4 &  3&  2&   5   \\ 
 \hline
 32 &  7&4&  5&  5&  3&  9  \\ 
 \hline
 64 &  11&5&  6&  6&  4&    15  \\ 
 \hline
 128 & 19& 7&  7&  8&  4&     26 \\ 
 \hline
  256 &  33&12& 8 & 14 &  5&     46 \\ 
 \hline
  512 &  58&19&  9&  28&  6&     82\\ 
 \hline
  1024 &  103&32&  10& 50 &  6&    147  \\ 
 \hline
  2048  &  187&40&  11&  92&  7&   268   \\ 
 \hline
4096  &  343&56   &  12&  165& 8&    492  \\ 
 \hline
 8192 &  631&88  &  13&  319& 9&    909  \\ 
 \hline
 16384 &  1173&132&  14 &  579&9&  1688    \\ 
 \hline
 32768 &  2186&206&   15&  1080&10&  3151    \\ 
 \hline
65536 &  4101&330&    16&  2017&11&   5909   \\ 
 \hline
131072 &  7711&518&    17&  3819&11&    11123   \\ 
 \hline
262144 &  14572&785&  18&  7185&12&    21010 \\ 
 \hline
524288 &  27595&1229&   19&  13653&13&    39809 \\ 
 \hline
1048576 &  52445&1878&    20&  25948&13&    75638 \\ 
 \hline
2097152 &  99868 &2930&    21&  94305&14&    144073  \\ 
 \hline
\end{tabular}
\end{center}
\caption{\Algo{getAnAttractor}:  experimental results for non linearly recurrent sequences.}
\label{table.2}
\end{table}

\subsubsection*{Comparison with bounds on the size of a smallest attractor}

The size of a smallest attractor for large families of sequences
has been studied in~\cite{SchaefferShallit2024}.
Let $x$ be an infinite sequence. Let $\gamma_x(n)$ denote the size of a smallest attractor
of the prefix of length $n$ of $x$.

It is shown in~\cite{SchaefferShallit2024} that for a linearly recurrent sequence $x$,
$\gamma_x(n) = O(1)$. For an automatic sequence $x$, $\gamma_x(n) = O(\log(n))$.
Further, $\gamma_x(n) = \Theta(1)$ or $\gamma_x(n) = \Theta(\log(n))$, and
it is decidable whether $\gamma_x(n)$ is $\Theta(1)$ or $\Theta(\log(n)$ 
using an automaton that generates the automatic sequence.

In~\cite{SchaefferShallit2024}, the authors give a greedy algorithm to find an attractor of size 
$O(A_x \log n)$, where the \emph{appearance constant} $A_x$ is finite.
The appearance constant is the smallest constant $C$ such that the prefix of length $Cn$
of $x$ contains all factors of length $n$ of $x$. It is finite for automatic sequences.
A linearly recurrent sequence is an infinite sequence $x$ for which
there is a constant $R_x$ such that, for each factor $u$ of $x$, there is a word $r$ of length at most $R_x|u|$
such that $ru$ contains two occurrences of $u$.
Note that a linearly recurrent infinite sequence has a finite appearance constant.
Indeed, it is known that $A_x \leq R_x + 1$.
The following proposition shows that the algorithm \Algo{getAnAttractor} also computes 
an attractor of size $O(A_x \log n)$.

\begin{proposition} \label{proposition.size1}
Let $x$ be an infinite sequence with a finite appearance constant $A_x$.   
Algorithm \Algo{getAnAttractor} applied to the prefix $w=x_{[1, n]}$ of $x$ 
computes an attractor of size $O(A_x\log n)$.
\end{proposition}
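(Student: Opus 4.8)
The plan is to bound the number $m$ of positions produced by the algorithm by $O(A_x \log n)$ by showing that the intervals $I_r = \llbracket k_r, j_r\rrbracket$ shrink geometrically, or rather that the suffixes $w_{[1,j_r]}$ they correspond to shrink geometrically. The key object is the quantity $j_r$, the right endpoint of the $r$-th interval, which equals $k_{r-1}-1$; I want to show $j_{r+c} \leq j_r / 2$ for a constant $c$ depending only on $A_x$, which immediately gives $m = O(A_x \log n)$.

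First I would recall from the description (and Proposition~\ref{proposition.correction}) that $k_r = \max\{\pos[q] - \ell[q] + 1 \mid \pos[q] \in I_r,\ d_{r-1}[q] \geq \ell[q]\}$, so $k_r$ is determined by some state $q$ with $\pos[q] \in I_r = \llbracket k_r, j_r\rrbracket$ whose shortest factor $u$, of length $\ell[q]$, has its leftmost occurrence ending at $\pos[q]$ and starting at $\pos[q]-\ell[q]+1 = k_r$. Since $j_{r+1} = k_r - 1$, the factor $u$ is a factor of the prefix $w_{[1,j_r]}$ occurring at its very end (or at least with an occurrence ending in $I_r$), and crucially $u$ does \emph{not} occur earlier with an occurrence avoiding $\llbracket k_r, j_r\rrbracket$ in a way that would have $r$-checked it sooner. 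The appearance constant enters here: because $x$ has appearance constant $A_x$, every factor of $x$ of length $\ell$ occurs within the prefix of length $A_x \ell$. So a factor $u$ with $|u| = \ell[q]$ and leftmost ending position $\pos[q]$ satisfies $\pos[q] \leq A_x \ell[q]$, hence $\ell[q] \geq \pos[q]/A_x \geq k_r / A_x$. This gives $j_{r+1} = k_r - 1 \geq \ell[q] \cdot A_x^{-1}\cdot$ — wait, I need the reverse bound: I want $k_r$ small relative to $j_r$. Let me instead argue: the position $k_r$ is chosen as small as possible among candidates, and one candidate is always $T[j_r]$ itself (the prefix state at $j_r$), contributing $j_r - \ell[T[j_r]] + 1$; the algorithm takes the maximum over several states but the search is confined to $\llbracket j_r - \ell[T[j_r]]+1,\ j_r\rrbracket$. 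So $j_r - k_r + 1 \leq \ell[T[j_r]]$, the length of the shortest factor reaching the state $T[j_r]$. Now $T[j_r]$ is a prefix state whose shortest factor $u$ has an occurrence ending at $j_r$, and since $u$'s leftmost occurrence ends no later than $A_x|u|$ we get $j_r \leq A_x \ell[T[j_r]]$ only if $j_r$ is that leftmost occurrence — but $j_r$ need not be leftmost. Instead I use: $\ell[T[j_r]] = \pos[T[j_r]]$ is impossible in general; rather $\ell[T[j_r]] \le \pos[T[j_r]] \le A_x\cdot(\text{something})$. The clean statement is $\pos[T[j]] \leq j$ and the shortest factor reaching $T[j]$ has length $\ell[T[j]]$ with leftmost ending position $\pos[T[j]] \le A_x\,\ell[T[j]]$... this direction bounds $\pos$ by $\ell$, i.e. $\ell[T[j_r]] \ge \pos[T[j_r]]/A_x$, again the wrong way.

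The right route, I expect, is to look at the factor $v = w_{[k_r, j_r]}$ itself (length $j_r - k_r + 1 \le \ell[T[j_r]]$) and the prefix $w_{[1,j_r]}$: after step $r$ every state $q$ with $\pos[q] \ge k_r$ is $r$-checked, so what remains unchecked lives entirely in $w_{[1, j_{r+1}]}$ with $j_{r+1} = k_r - 1$. Consider the prefix $p = w_{[1,j_{r+1}]}$. Its shortest defining factor at the state $T[j_{r+1}]$ is a factor of $x$; by the appearance constant every factor of length $\lceil j_{r+1}/A_x \rceil$ that occurs in $x$ has an occurrence within $w_{[1, j_{r+1}]} $ — in particular the factor $w_{[j_{r+1} - \lceil j_{r+1}/A_x\rceil + 1,\ j_{r+1}]}$, a suffix of $p$, occurs again earlier, ending at some position $\le A_x \lceil j_{r+1}/A_x \rceil \approx j_{r+1}$; that earlier occurrence is of limited help directly, but it shows $\ell[T[j_{r+1}]] \le j_{r+1}/A_x + 1$ is \emph{false} — the shortest factor could be much shorter. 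Hmm. So the honest claim is: $\ell[T[j_{r+1}]] \ge j_{r+1}/A_x$ would let me conclude $j_{r+1} - k_{r+1} + 1 \le \ell[T[j_{r+1}]]$ is not small enough. I think the correct and provable statement is the one the algorithm's step makes transparent: $k_r \le j_r - \ell[T[j_r]] + 1$ is wrong; actually $k_r \ge j_r - \ell[T[j_r]] + 1$ since the search starts at $i = j_r$ with $q = T[j_r]$, $\pos[q]=j_r$, and if $d[q]\ge\ell[q]$ sets $k \ge j_r - \ell[T[j_r]]+1$, and $k$ only grows. So $j_{r+1} = k_r - 1 \ge j_r - \ell[T[j_r]]$. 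If $T[j_r]$ has already been checked ($d[T[j_r]] < \ell[T[j_r]]$) the bound still holds vacuously with other states. Then the main claim is $\ell[T[j_r]] \le j_r(1 - 1/A_x) $ cannot be guaranteed — so I suspect the actual argument tracks the \emph{length of the unchecked prefix} differently: after adding $k_r$, the longest factor that is possibly still unchecked and "responsible" for the next position has length at most $\ell[T[j_{r+1}]]$, and one shows that the longest suffix of $w_{[1,j_{r+1}]}$ whose every occurrence is confined to $w_{[1,j_{r+1}]}$ (i.e. occurs only "at the end") has length $\ge j_{r+1}(1 - 1/A_x) - O(1)$ by the appearance constant; combining $j_{r+1} - k_{r+1} \le \ell[T[j_{r+1}]] \le j_{r+1}/A_x + O(1)$ gives $k_{r+1} \ge j_{r+1}(1 - 1/A_x) - O(1)$, hence over $A_x$ steps the right endpoint drops by a factor bounded away from $1$, yielding $m = O(A_x \log n)$.

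\textbf{The main obstacle} I anticipate is precisely pinning down the inequality $\ell[T[j_r]] \le j_r / A_x + O(1)$ (or whatever the correct form is): it requires arguing that the state $T[j_r]$ reached by the whole prefix has a \emph{short} shortest-factor, using that every sufficiently long suffix of $x_{[1,j_r]}$ reappears earlier within the prefix (a consequence of the finite appearance constant), so that the Myhill–Nerode class of $x_{[1,j_r]}$ in $\mathcal S(w)$ contains a short representative. One must handle carefully the difference between "leftmost occurrence ending position" $\pos$ and the right endpoint $j_r$, the off-by-$O(1)$ additive terms from rounding in the appearance-constant inequality, and the case where $T[j_r]$ is already $\DONE$ (then a different witnessing state governs $k_r$, but a symmetric bound applies since that state's $\pos$ also lies in $I_r$ and its leftmost occurrence obeys the appearance bound). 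Once the geometric decay $j_{r + O(A_x)} \le j_r/2$ is established, summing the logarithmically many scales and noting each scale contributes $O(A_x)$ positions gives the stated $O(A_x \log n)$ bound; I would assemble that final counting at the end.
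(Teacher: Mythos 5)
Your high-level strategy — show that the right endpoints $j_r$ of the scanned intervals decay geometrically at a rate governed by $A_x$, so that the number of iterations (one attractor position each) is $O(A_x\log n)$ — is genuinely different from the paper's argument and is in fact viable, but your execution of it has a real gap. First, the inequality you ultimately lean on, $\ell[T[j_{r+1}]]\leq j_{r+1}/A_x+O(1)$, is false in general: for a highly repetitive word such as a Fibonacci prefix, a long suffix of $w_{[1,j]}$ reoccurs earlier, which makes $\ell[T[j]]$ a large constant fraction of $j$, exceeding $j/A_x$. Second, and more seriously, your concluding step is logically inverted: you derive a \emph{lower} bound $k_{r+1}\geq j_{r+1}(1-1/A_x)-O(1)$ and claim it yields geometric decay of the right endpoint, but a lower bound on $k_{r+1}$ only bounds the progress per step from \emph{above}, so it can never bound the number of iterations from above. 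Throughout, you also conflate the state $T[j_r]$ with the state that actually determines $k_r$, which is the maximizer $q^*$ of $\pos[q]-\ell[q]+1$ over unchecked states with $\pos[q]\in I_r$.

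The fix is the very inequality you twice wrote down and dismissed as ``the wrong way.'' For any state $q$, the shortest factor $u$ of $q$ has length $\ell[q]$ and, by definition of the appearance constant, an occurrence ending at some position $\leq A_x\ell[q]$; since $\pos[q]$ is the leftmost ending position of $u$ in $w$, this gives $\pos[q]\leq A_x\ell[q]$, i.e.\ $\ell[q]\geq\pos[q]/A_x$. Applying this to the maximizing state $q^*$ yields
\[
j_{r+1}=k_r-1=\pos[q^*]-\ell[q^*]\leq\Bigl(1-\tfrac{1}{A_x}\Bigr)\pos[q^*]\leq\Bigl(1-\tfrac{1}{A_x}\Bigr)j_r,
\]
so $j_r\leq n(1-1/A_x)^{r-1}$ and the loop runs for at most $1+A_x\ln n$ iterations, giving the claim. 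Note that this repaired argument is quite different from the paper's proof, which instead compares the outputs $P^{(s)}$ of variants of \Algo{getAnAttractor} restricted to factor lengths in $\llbracket 1,s\rrbracket$ and shows $|P^{(2s)}|\leq|P^{(s)}|+2A_x$ across the $\log n$ doubling scales $s=1,2,4,\dots$; your route, once the key inequality is put in the right direction, is arguably more direct, but as submitted the decisive step is missing and the deduction offered in its place does not work.
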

\begin{proof}
Let $w=x_{[1, n]}$. Let $k$ be the integer such that $2^{k-1} \leq n < 2^{k}$.
We consider intervals $I_s =\llbracket 1, s \rrbracket$ of possible lengths of factors of $w$. 
Note that $\llbracket 1, n \rrbracket \subseteq I_{2^{k}}$.

The algorithm \Algo{getAnAttractor\_Is} applied to the prefix $w=x_{[1, n]}$ is obtained from \Algo{getAnAttractor} by changing $\ell[q]$ into $\ell_{I}[q]$ in its pseudo code, where $\ell_{I_s}[q]$ is the smallest length of all factors $u$ in $q$ with $|u|$ in~$I_s$.
Let $P^{(s)} = (k^{(s)}_{1}, k^{(s)}_2, \ldots, k^{(s)}_{m_s})$ be the (decreasing) sequence of positions that is the output of \Algo{getAnAttractor\_Is}. Thus, for each factor $u$ of $w$ of length between $1$ and $s$, $u$ covers a position in $P^{(s)}$. 

Let us show that there is at most one position in $P^{(s)}$ strictly between two consecutive positions of $P^{(2s)}$, and, if there is one, $k'$, where $k^{(2s)}_i$ is the smallest position in $P^{(2s)}$ greater than $k'$, then $k^{(2s)}_i - k' \leq  s$.

Indeed, assume that there are indices $j < i$ such that $P^{(s)} \cap ]k^{(2s)}_{j+1} , k^{(2s)}_{j}[$  $= (k^{(s)}_{i},$ $\ldots, k^{(s)}_{i+m})$ in decreasing order. When the new position $k^{(s)}_i$ is computed at Line 10 of  $\Algo{getAnAttractor\_Is}{}$, there is a position $r$, a non-empty factor $u$ of $w$ of size at most $s$ contained in a state $q$ with $\pos[q] = r$, such that $k^{(s)}_i = r - |u| + 1$.
If $k^{(2s)}_j - k^{(s)}_i > s$, we would have $k^{(2s)}_{j+1} \geq k^{(s)}_i$ since $\ell_{I_s}[q] \leq s$.
Similarly, if $m \geq 1$, then there is a position $k^{(s)}_{i+1} \leq r < k^{(s)}_i$, a non-empty factor $u$ of $w$ of size at most $s$ contained in a state $q$ with $\pos[q] = r$, such that $k^{(s)}_{i+1} = r - |u| + 1$.
This would imply that $k^{(2s)}_{j+1} \geq k^{(s)}_{i+1}$. Hence, there is at most one position in $P^{(s)}$ between two consecutive positions of $P^{(2s)}$.

Let us now consider indices $i, j$ such that \sloppy $P^{(2s)} \cap ]k^{(s)}_{i+1} , k^{(s)}_{i}[
= (k^{(2s)}_{j}, \ldots, k^{(2s)}_{j+m})$ in decreasing order. We show that the distance between two consecutive positions 
of $P^{(2s)} \cap ]k^{(s)}_{i+1} , k^{(s)}_{i}[$ is at least $s+1$. 

Indeed, let us assume that
$k^{(2s)}_j - k^{(2s)}_{j+1} \leq s$. Then there is position $r$ with $k^{(2s)}_{j+1} \leq r < k^{(2s)}_j$, a non-empty factor $u$ of $w$ of size at most $s$ contained in a state $q$ with $\pos[q] = r$, such that $k^{(2s)}_{j+1} = r - |u| + 1$. This would imply that
$k^{(s)}_{i+1} \geq k^{(2s)}_{j+1}$, a contradiction. Similarly, the distance between each $k^{(2s)}_{j'}$ and
$k^{(2s)}_{j'+1}$ inside $]k^{(s)}_{i+1} , k^{(s)}_{i}[$ is at least $s+1$. 

At Line 10 of $\Algo{getAnAttractor\_I2s}{}$, each position $k^{(2s)}_{j'}$ inside $]k^{(s)}_{i+1} , k^{(s)}_{i}[$ 
is obtained as $r - |u_{j'}| + 1$ for some $r$ with $k^{(2s)} + s < r < k^{(2s)}_{j'-1}$, $u_{j'}$ has a length between $s+1$ and $2s$, belongs to a state $q$ with $\pos[q] = r$. We choose a unique such $r$ for each such $k^{(2s)}_{j'}$.
Since each factor of length at most $2s$ is a factor of the suffix of length 
$2sA_x$ of $w$, all these positions $r$ are inside $\llbracket 1, 2sA_x \rrbracket$ by definition of the table $\pos$,
and two consecutive ones are at distance at least $s$. Their number is thus at most $2A_x$.
This implies that the total number of all $k^{(2s)}_j$ is at most $|P^{(s)}| + 2A_x$.
Hence, $|P^{(2s)}| \leq |P^{(s)}| + 2A_x$. 

For $i = 1$,  $I_1 = \llbracket 1 \rrbracket$, $\Algo{getAnAttractor\_I1}$ outputs $P_1$ of size at most $|A|$.
We get $|P^{(2^k)}| \leq |P^{(1)}| + k(2A_x)$. Hence, the size of the attractor $P$ produced by $\Algo{getAnAttractor}$
is the size of $P^{(2^k)}$, at most $|P^{(1)}| + 2A_x (\log(n) + 1) = O(A_x\log(n))$.
\end{proof}
Algorithm $\Algo{getAnAttractor}$ applied to the prefix $w=x_{[1, n]}$ of a linearly recurrent sequence 
$x$ may not compute an attractor of size  $O(1)$. Indeed, the Chacon sequence is linearly recurrent, but the size of the attractor obtained with $\Algo{getAnAttractor}$ does not seem to have a bounded size experimentally. 
This is due to the fact that the algorithm transmits information only backwards while scanning the positions in the word. The linear recurrence property needs to transmit some information forward to reduce the size of the attractor.

\section{Conclusion}

The DAWG or Suffix automaton data structure reveals to be a powerful tool to solve the questions on attractors.
This is far from surprising because the DAWG gives a rather direct access to all the factors of a word, like suffix trees and suffix arrays.
But the suffix link tree of the DAWG, different from its analogues in the other data structures, clarifies the required properties used to design the appropriate algorithms.

The present work opens the way to the study of attractors of specific type of words and to the design of efficient algorithms to compute their smallest attractors when feasible.

%---------------------------------------------------------------------------
\bibliographystyle{plain} 
\bibliography{attractors}

\end{document}